\def\ps@headings{%
\def\@oddhead{\mbox{}\scriptsize\rightmark \hfil \thepage}%
\def\@evenhead{\scriptsize\thepage \hfil \leftmark\mbox{}}%
\def\@oddfoot{}%
\def\@evenfoot{}}
\newtheorem{theorem}{Theorem}
\newtheorem{corollary}{Corollary}
\newtheorem{lemma}{Lemma}
\def\deq{{\,{\buildrel \bigtriangleup \over =}\,}}
\def\conv{\otimes}
\def\deconv{\oslash}
\def\eps{\varepsilon}
\def\P{{Pr}}  %%%   appears in many equations  Prob
\def\minplus{{$(\min, +)\,$}} 
\def\S{{\cal S}}
\def\A{{\cal A}}
\def\D{{\cal D}}
\def\B{{\cal B}}
\def\W{{\mathcal W}}
\def\M{{\mathcal M}}
\def\F{{\mathcal F}}
\def\X{{\mathcal X}}
\def\Y{{\mathcal Y}}
\def\Z{{\mathcal Z}}
\def\Z{{\mathcal Z}}
\def\Msum{{\mathsf{M}}}
\def\mx{{$(\min, \times)$}}
\def\sr{{$(\sigma(s), \rho(s))$}}
\begin{document}

\title{A Network Calculus  Approach for the Analysis of  Multi-Hop    Fading  Channels} 
\author{Hussein~Al-Zubaidy*, %~\IEEEmembership{Student Member,~IEEE,}
         J\"{o}rg Liebeherr*, Almut Burchard**  \\ 
        * Department of ECE,   University of Toronto,  Canada. \\
        ** Department of Mathematics,   University of Toronto, Canada. \\
	 	E-mail:  \{hzubaidy, jorg\}@comm.utoronto.ca; almut@math.toronto.edu.
        }%

% make the title area
\maketitle

\begin{abstract} 
A fundamental problem for the delay and backlog analysis
across multi-hop paths in wireless networks 
is how to account for the random properties of the 
wireless channel. Since the usual statistical models 
for radio signals in a propagation environment 
do not lend themselves easily to a description of 
the available service rate,
the performance analysis of wireless networks 
has resorted to higher-layer abstractions, 
e.g., using Markov chain models. 
In this work, we propose a network calculus 
that can incorporate common statistical models of 
fading channels and obtain statistical bounds on 
delay and backlog across multiple nodes.  
We conduct the analysis in a transfer domain, which we 
refer to as the {\it SNR domain}, where the service process
at a link is characterized by the instantaneous signal-to-noise ratio
at the receiver.  We discover that, in the transfer domain, 
the network model is governed by a dioid algebra, which we refer to as 
\mx~algebra. Using this algebra we derive the desired delay and backlog bounds. 
An application of the analysis is demonstrated 
for a simple multi-hop network with Rayleigh fading channels. 
\end{abstract}
% no keywords

%\begin{keyword}
%%% keywords here, in the form: keyword \sep keyword
%
%
%
%\end{keyword}

%---------
\section{Introduction}
The network-layer  performance analysis  
 seeks to provide estimates on the delays
experienced by traffic traversing the elements of a network,
as well as the corresponding buffer requirements.
For wireless networks, a question of interest is how the stochastic 
properties of  wireless channels impact delay and
backlog performance.
Wireless channels are characterized by  
rapid variation of  channel quality 
caused by the mobility and location of  communicating devices. This is due to {\it fading}, which is the deviation in the  attenuation  experienced by the transmitted signal  when traversing a wireless  channel. The term {\it fading channel} is used to refer to  a channel that experiences such effects. 
In this paper we explore the network-layer performance of 
a multi-hop  network where each link is 
represented by a fading channel.

We model the wireless network by tandem queues with variable capacity 
servers, where each server expresses the random capacity of a 
fading channel. We ignore the impact of coding by assuming that 
transmission rates  over the fading channels are equal to their information-theoretic 
capacity limit, $C$, which is expressed as a function of the instantaneous 
signal-to-noise ratio (SNR) at the receiver,  $\gamma$, 
by  
$C(\gamma) = W \log (1+ \gamma)$,  
where $W$ is the channel bandwidth (in {\it Hz}).  
%(We make the common assumption that the channel is sampled at a constant rate, and that the channel gain at every sample is identically distributed.) 
%The capacity is measured  in `bits per second' when  the logarithm is taken to base 2, or  in `nats per second' when the natural logarithm is used instead.
Numerous models are available to describe the gain of fading channels depending on the type of fading (slow or fast), 
and the environment (e.g., urban or rural). 
The instantaneous, information-theoretic  channel capacity of a fading channel can be represented as the logarithm of  $\gamma$ by (see Chp.~14.2 in  \cite{Proakis2007})  
\begin{equation} \label{eq:capacity}
C(\gamma) = c \log \big( g(\gamma) \big) \, ,
\end{equation} 
where $c$ is a constant and the function $g(\gamma)$ is  used  to characterize the fading channel. %The function $g(\cdot)$ depends on the technologies used by the transceivers.
%For example, in the case of the Shannon  capacity above, $g(\gamma) =1+ \gamma$ and $c=W$. 
We are interested in finding bounds on the end-to-end delay and on buffer 
requirements for a cascade of fading channels, with store-and-forward 
processing at each channel. 
 
The analysis in this paper takes a system-theoretic %(stochastic) 
stochastic network calculus approach \cite{Jiang-Book}, which 
describes the network properties using a  $(\min, +)$ dioid algebra.
Arrival and departure processes at a network element are 
described by bivariate stochastic processes $A(\tau,t)$ and 
$D(\tau,t)$, respectively, denoting the 
cumulative arrivals (departures) in the time interval $[\tau,t)$. A  
 network element is characterized by the service process $S(\tau,t)$, 
denoting the 
available service in $[\tau,t)$. The input-output relationship at the network 
element is governed by 
\begin{align} \label{eq:conv}
D(0,t) \geq A \ast S (0,t) \, , 
\end{align} 
where the \minplus convolution operator `$\ast$'  is defined as   
 $f \ast g (\tau, t) = \inf_{\tau \leq u \leq t} 
\{ f (\tau,u) + g(u, t) \}$.
If network traffic passes through a tandem of $N$ network elements with service 
processes $S_1,S_2, \ldots, S_N$, the service of the network as a whole can be 
expressed by the convolution $S_1 \ast S_2\ast \ldots \ast S_N$.

The stochastic properties of fading channels present a 
formidable challenge for a network-layer analysis  
since the service processes corresponding
to the channel capacity of common fading channel models 
such as  Rician, Rayleigh, or Nakagami-$m$, 
require to take a logarithm of their distributions.
As discussed in the next section, researchers  
frequently turn to higher-layer abstractions to model fading channels. 
Widely used abstraction are  the two-state Gilbert-Elliott model and its 
extensions to finite-state Markov channels (FSMC) \cite{Sadeghi}.
FSMC models simplify the analysis to a degree that the network model becomes 
tractable, at least at a single node. Extensions to multi-hop settings 
encounter a rapidly growing state space. As of today, a general multihop analysis 
that is applicable to  models of  fading channels, such as Rician, Rayleigh, or Nakagami-$m$, remains open.

In this paper, we pursue a  novel approach 
to the analysis of multi-hop wireless networks. We develop a 
calculus for wireless networks that can be applied to  
fading channel models from the wireless communication 
literature to provide network-layer performance bounds.
We view the  network-layer model with arrival, departure 
and service processes 
as residing in a {\it  bit domain}, where traffic and 
service is measured in 
bits. 
We view the fading channel models used in wireless communications 
as residing in an alternate domain, which we call the {\it  SNR domain}, 
where channel properties are expressed in terms of the distribution of the 
signal-to-noise ratio at the receiver. 
We then derive a method to compute performance bounds from 
these traffic and service characterizations.

A key observation in our work is that service elements in the 
SNR domain obey the laws of a dioid algebra.  We devise a 
suitable dioid, referred to as {\it  \mx~algebra}, where the minimum 
takes the role of the standard addition, and the second operation
is the usual multiplication, and use it for analysis 
in the SNR domain. In this domain  multi-hop descriptions of fading channels  become  tractable. 
In particular, we find that a cascade of fading channels can be expressed 
in terms of a convolution in the new algebra of the constituting channels. 
The key to our analysis is that we derive   
performance bounds entirely in the SNR domain. Observing that the 
bit and SNR domains are linked by the exponential function,
we transfer arrival and departure 
processes from the bit to the SNR domain. 
Then, we derive backlog and delay bounds in the transfer domain using 
the \mx~algebra. The results are  mapped back to the original bit domain 
to finally give us the desired performance bounds.
%{ (Alternatively, we could perform only the concatenation of fading channels in the SNR domain, and map the result to the bit domain. However, as will become clear later, such an approach will generally yield weaker results.) }
Our derivations in the SNR domain require the computation 
of products and quotients of random variables. Here, we take 
advantage of the  \textit{Mellin transform} 
to facilitate  otherwise cumbersome calculations.  Then, the computational 
problem is reduced  to finding the Mellin transform for  service and 
traffic processes.  %The Mellin transform for a variety of 
%useful distributions  can be found in the literature  
%\cite{Gradshteyn}\cite{Springer}.  % More details regarding products of random variables can be found in \cite{Galambos}. This approach  also allows for  exploring the multiplexing gain of independent flows.  

The main contribution of this paper is the development of a 
framework for studying the impact of channel gain models 
on the network-layer performance of wireless networks. 
For the purposes of this paper, the SNR domain is used solely as a 
transfer domain that enables us to solve an otherwise intractable 
 mathematical problem. On the other hand, the  ability to map     
quantities that appear in network-layer models and concepts 
found in a physical-layer analysis may prove useful in 
a broader context, e.g., for studying 
cross-layer performance issues in wireless communications. 
Moreover, the \mx~algebra and the Mellin transform form a tool set 
that can be applied more generally in wireless communications 
for studying the channel gain of cascades of fading channels. 
As the first paper on the \mx~network calculus algebra, our 
paper only considers simple network scenarios and makes 
numerous convenient assumptions (which are made explicit 
in Sec.~\ref{sec:model}). %Additional results and suggestions for future research can be found in our related technical report \cite{ZuLiBu12a}.
There is room for significant future work on 
extensions of the model and a relaxation of the presented 
assumptions.   

The remainder of the paper is organized as follows. 
In Sec.~\ref{sec:related} we discuss related work. 
We describe the system model in Sec.~\ref{sec:model}, where we also  
motivate the use of the SNR domain.  In Sec.~\ref{sec:calculus} we present the \mx~algebra and derive performance bounds. 
In Sec.~\ref{sec:rayleigh} we apply the
analysis to a cascade of Rayleigh channels, and present 
numerical examples. In Sec.~\ref{sec:rayleigh}, we investigate 
a network with cross traffic at each node. 
We discuss brief conclusions in Sec.~\ref{sec:conclusions}.
%------------------------------------------------------------
%------------------------------------------------------------
%------------COMMENT-----------------------------------
%------------------------------------------------------------
%------------------------------------------------------------

%------------------------------------------------------------
%------------------------------------------------------------
%------------COMMENT-----------------------------------
%------------------------------------------------------------
%------------------------------------------------------------

\section{Related Work}
\label{sec:related}

Analytical approaches for network-layer performance analysis of wireless networks 
include queueing theory, effective bandwidth and, more recently, 
network calculus. Since the service processes corresponding
to the channel capacity of common fading channel models 
such as  Rician, Rayleigh, or Nakagami-$m$, 
require to take a logarithm of their distributions, researchers    
often turn to higher-layer abstractions to model fading channels, 
which lend themselves more easily to an analysis. 
A widely used abstraction is  the two-state channel model developed 
by Gilbert \cite{Gilbert} and Elliott \cite{Elliott}, and 
subsequent  extensions to  a finite-state Markov 
channel (FSMC) \cite{WangMoayeri}. Markov channel models are 
well suited to express the time correlation of fading channel samples. We refer to \cite{Sadeghi} for a survey of the development and applications of FSMC models.
Zorzi et al. \cite{Zorzi-ICPUC} evaluated the accuracy of first-order Markov channel models of fading channels,  where the next channel sample depends only on the current state of the Markov process, and higher order processes that can capture memory  extending further back in the process history.  The authors found that a first-order Markov model is a good approximation of the fading channel, and that using higher order Markov processes does not significantly  
improve the accuracy of the model.  

Queue-based channel (QBC) \cite{Zhong_Alajaji_Takahara} is an alternative model for fading channels, which 
models a binary additive noise channel with memory based on a finite queue. Here,  a queue with size $M$ contains the  last $M$ noise symbols, 
and the noise process is an $M^{th}$-order Markov chain.  The model was found to  provide a better approximation to the Rayleigh and Rician slow fading channels compared to the Gilbert-Elliot model  \cite{Ngatched}.  An extension of the QBC model, called Weighted QBC \cite{Ngatched} permit queue cells (i.e., channel samples) to contribute with different weights to the noise process.

Queueing theoretic studies of fading channels
generally apply  approximations to reduce the complexity of multi-hop models.  Le, Nguyen, and Hossain \cite{Le_Nguyen_HossainWCNC07} apply a decomposition approximation to analyze the loss probability and average delay of a multi-hop wireless network with slotted transmissions for a batch Bernoulli arrival process, and with independent cross traffic at each node. The wireless link is assumed to employ adaptive modulation and coding with multiple modes, where each mode corresponds to a given link rate, which is chosen based on the SNR of the channel.  The channel state of a link is assumed to be stationary, and channel states in consecutive time slots 
are independent.  
Another decomposition approximation is presented by Le and Hossain \cite{Le_Ekram08}, who consider a multi-hop  tandem network with a batch arrival process and multi-rate transmissions, to develop a routing scheme that can meet given 
delay and loss requirements. % Under the assumption of per-flow queuing along each routing path, they solved  the tandem queue model to determine these metrics for every single path. 
The analysis obtains end-to-end loss rates and delays with a decomposition analysis, and feeds the results as metrics to the routing algorithm. %use a {\color{magenta}(similar?)} tandem queue model to analyze QoS routing with batch arrivals and  multirate transmission in the physical layer. [$\Longleftarrow$ \bf  Does not make sense. There is no routing in tandem networks!]} 
Bisnik and Abouzeid \cite{Bisnik} model a multi-hop wireless network as 
an  open network of G/G/1 queueing systems. Using diffusion approximation,   
they obtain closed-form expressions for average end-to-end delays.  
Ishizaki and Hwang \cite{ishizaki2007} studied
the impact of multiuser diversity assisted packet scheduling on the packet delay performance in a 
wireless network with Nakagami-m fading channels. The network is modeled by a multi-queue system, where each channel is described by an FSMC. Under assumptions of stationarity, homogeneity, and independence of the channel processes, they approximate the tail distribution of 
packet delays, and compare it to that of a round-robin scheduler. The results indicate that 
the delay performance of multiuser diversity assisted scheduling algorithms is not necessarily superior to that of
round-robin scheduling. 
Since the application of queuing theoretical methods to study 
cascade of fading channels requires  
many  assumptions on  arrival and service distributions, and 
simplifications of the model, the use 
of classical queueing theoretic methods for the 
performance analysis  of multi-hop wireless networks  has been put into question \cite{Chen_Yang_Darwazeh}. 

%---------------------------------------------
%   EFFECTIVE BANDWIDTH METHODS
%----------------------------------------------
An effective bandwidth \cite{Kelly} 
analysis seeks to develop (asymptotic) bounds on 
performance metrics, e.g., an exponential decay
of the backlog.   
Wu and Negi \cite{Wu} 
have adapted an effective bandwidth analysis to the analysis of 
fading channels. They  introduce the concept of effective capacity,   %(analogous to effective bandwidth \cite{Kelly}) was introduced. In this approach, 
which characterizes a wireless channel by a log-moment generating function (log-MGF) of the channel capacity. 
They  
obtain an asymptotic approximation of the delay bound violation probability of a  Rayleigh fading channel. 
%  by
%$$ 
%\sup_t \P \{ D(t)\ge D_{max}  \} \approx \gamma^{(c)}(\mu)  e^{-\theta^%{(c)} (\mu) D_{max} } \,,
%$$
%where  $D(t)$ is the delay at time $t$, $\gamma^{(c)}(\mu)= \P \{Q(t) \ge 0  \} $ is the probability of non-empty buffer at time $t$ and  $\theta^{(c)} (\mu) $  is the QoS exponent, for a given source rate $\mu$.  
Due to the difficulty of  computing the moment generating function (MGF)  of the Rayleigh distribution, they assume non-correlated distributions  with low SNR 
and estimate channel parameters from  measurements. 
The work has been extended to correlated Rayleigh  and correlated Nakagami-$m$  channels, and to cascades of fading channels 
\cite{Wang_CorrRayleigh, Wang_CorrNakagami, Wu-MNA06}.
A closely related concept is the effective channel capacity presented by Li et al. \cite{Chengzhi-Che-Li}, which describes the available channel capacity  by a first order Markov chain and computes the  log-MGF of the underlying Markov process. Taking advantage of 
methods developed in \cite{LiBuLi07}, they   
compute statistical delay bounds for Nakagami-m fading channel.  Hassan, Krunz, and Matta \cite{krunz2004} use an effective bandwidth analysis  
to study  delay
and  loss performance at a single wireless link, which is modeled by an FSMC.  For fluid On-Off traffic and FIFO
buffering, they obtain a closed form expression for 
the effective bandwidth required to guarantee bounds on delay and packet loss. 

There is a collection of recent works that 
apply stochastic network calculus  methods \cite{Jiang-Book}
to wireless networks with fading channels. 
The stochastic network calculus is closely related to the 
effective bandwidth theory, in that 
it seeks to develop bounds on performance metrics under   
assumptions also found in the effective bandwidth literature. 
Different from effective bandwidth literature, 
stochastic network calculus methods seek to develop non-asymptotic bounds. 
An attractive element of a network calculus analysis is 
that sometimes it is possible to 
extend a single node analysis to a tandem of nodes, using 
the \minplus convolution operation seen in the introduction. 

Fidler \cite{Fidler-Fading} presents a network  calculus methodology 
for a two-state  FSMC model of a  single-hop  fading channel. 
He applies the MGF network  calculus, which was 
suggested in  the problem sets of Chapter~7 in \cite{CSChang}, and 
which has been developed in \cite{FidlerMGF}. 
The MGF network  calculus  takes its name from 
the  extensive use of  moment generating functions in 
the derivation of performance bounds. 
Mahmood, Rizk, and Jiang \cite{Mahmood_Rizk_Jiang} apply the MGF network calculus to  MIMO 
channels   and derive delay bounds 
for periodic traffic sources. 
Zheng et. al. \cite{ZhengIET2011} also use an MGF network calculus
to study the performance of two-hop relay networks. 
A similar methodology is applied by Mahmod, Vehkaper\"{a}, and Jiang  \cite{MahmoodArxiv} to compute the 
throughput of a multi-user DS-CDMA system with delay constraints. 
In the works above that apply the MGF network  calculus,  
models for a cascade of multiple fading channels become  
complex, so that multi-node results for networks with 
more than two nodes have not been obtained. 
The \mx~network calculus developed in this paper 
uses similar descriptions and assumptions for traffic and service 
as the MGF network calculus. By performing  computations in a transfer domain, 
where  fading channel models take a simpler form, we are able to 
compute multi-node service descriptions for an arbitrarily large number of nodes.  
 
The MGF network calculus  assumes that arrivals and service at each node are independent. 
These  assumptions can be  relaxed  using statistical envelope descriptions for traffic (effective envelopes) and service (statistical 
service curve) \cite{Burchard_ToIT06,Jiang-Book}. 
Jiang and Emstad \cite{Jiang:2005} have applied an approach with envelopes to a fading channel where the wireless channel is 
characterized by two stochastic processes: an ideal service process and 
an impairment process, where the impairment process 
captures effects due to fading, noise, and cross traffic. 
Verticale and Giacomazzi \cite{Verticale:2009}  
obtain a closed form expression for the variance of a service curve, which 
describes the available service by a  Markov chain. This is used for the analysis of an FSMC model of a Rayleigh fading channel.  For computing the bounds for 
Markovian arrivals, they apply the bounded-variance network calculus  introduced in \cite{Giacomazzi-Saddemi}, which is an extension of the  central limit theorem methods  by Choe and Shroff \cite{Choe-Shroff-MVA} to multi-hop paths. 
Verticale  \cite{Verticale-Q2SWinet} has applied the same methodology 
to constant bit rate traffic. 
Ciucu, Pan, and Hohlfeld \cite{CiucuAlerton10} and Ciucu \cite{SigmetricsCiucu11} 
present non-asymptotic (i.e., finite number of hops), closed-form expressions for the delay and throughput distributions for multi-hop wireless networks. As many of the works discussed above, 
the fading channel is modeled by an abstraction  
that uses a link layer model of the transmission channel. Here,  the channel is assumed to be governed by a slotted-ALOHA system in half-duplex mode. The model of this channel is a  two-state On-Off server, 
where a node can transmit (i.e., is in the On state) only when 
other all nodes in the interference range are not transmitting. 
%The process ideal service process describes the   amount of service that would have been provided by the server  in the interval $(s,t]$ if there was no impairment to the server, and $I(s,t)$ is the amount of service at that time interval  that can not be delivered to the input flow because of some impairment to the server. This impairment happens when the channel is in bad state (due to fading, noise or contention) and hence, no data can be transmitted. This model assumes two-state channel, where the channel is either `Good' or `Bad'. The  actual service delivered to the input traffic is then given by $S(0,t) = \hat S(0,t) - I(0,t), \forall t \ge 0$.  }

%%%% HZ Added new literature BEGINS 
There is also a literature on physical-layer performance metrics  of fading channels in  multi-hop wireless  networks. Hasna and Alouini \cite{Hasna_Alouini_2003} have presented a framework for evaluating the end-to-end outage probability of a multi-hop wireless relay network with independent, non-regenerative relays, i.e., amplify-and-forward (AF), over Nakagami fading channels.  They provide a closed-form expression for the MGF of the reciprocal of the equivalent end-to-end SNR for independent Nakagami fading channels. %However, the outage probability was given in its Laplace transform form, and they used numerical technique to evaluate the inverse Laplace transform for this probability. 
For the same AF relay network,  Tsiftsis \cite{Tsiftsis08} obtains 
a closed-form  bound for the average error probability. This bound is reportedly tight at low SNR, but may become loose for higher SNR values and for more severe fading environment, e.g., Rayleigh fading. Amarasuriya,  Tellambura, and Ardakani \cite{Amarasuriya11}  present  an alternative bound to \cite{Hasna_Alouini_2003} on the end-to-end  SNR in   multi-hop AF relay networks. They derive the distribution function and the MGF for i.i.d.  Nakagami-m  fading and  
for independent, but non-identically distributed Rayleigh fading. 
The works above study physical layer performance bounds of channel-assisted, amplify-and-forward relaying over a multi-hop fading channels. They do not consider buffering or traffic burstiness, and are not concerned with 
network performance metrics addressed in this paper.  Delay 
and backlog analysis and optimization of multihop wireless networks 
remains an open research problem \cite{Le_Ekram08}.

\begin{figure}
\centering
\includegraphics [width=4.5in]{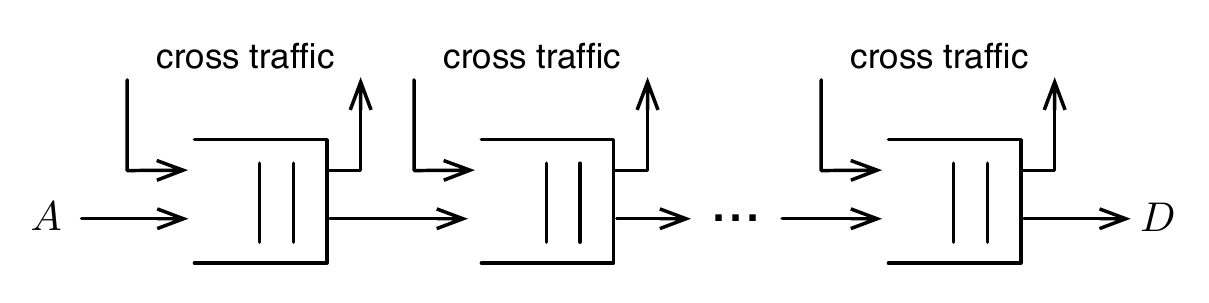}
%\vspace{-2mm}
\caption{Tandem network model.}
\label{fig:tandem}

%\vspace{-7mm}
\end{figure}

\section{Network Model in the Bit and SNR Domains}  \label{sec:model}

We consider a wireless $N$-node tandem network 
as shown in Fig.~\ref{fig:tandem}, where each node is modeled by 
a server with an infinite buffer. We are interested in the performance 
experienced by a (through) flow that traverses the 
entire network and may encounter cross traffic at each node. 
One can  think of the cross traffic   at a node as  the 
aggregate of all traffic traversing the node
that does not belong to the through flow. The service given 
to the through flow at a node is  a random process, which is 
governed by the instantaneous channel capacity as well as the cross traffic 
at the  node. We consider a fluid-flow traffic model where 
the flow is infinitely divisible. 
We will work in a discrete-time domain 
$\mathcal T = \{t_i: t_i =i \ \Delta t, i\in \mathds Z\}$, 
where $\mathds Z$ is the set of integers 
and   $\Delta t$ is length of the time unit. 
Setting  $\Delta t = 1$ allows us to replace $t_i$ by $i$, 
which we interpret as the index of a time slot.
We assume that the system is started with empty queues at time $t=0$.  

Different nodes and different traffic flows will be distinguished
by subscripts. The cumulative arrivals to, 
the service offered by, and the departures from
the node are represented by random processes 
$A_n$, $S_n$, and  $D_n$
that will be described more precisely below,
with $A_n=D_{n-1}$ for $n=1,\dots, N-1$.
We denote by $A=A_1$ and $D= D_N$
the arrivals to and the departures from the tandem network.
%For cross traffic, we use a second subscript $c$ and write
%$A_{cn}(\tau,t)$ and $D_{cn}(\tau,t)$  to denote cross traffic 
%arrivals and departures at node~$n$.
Throughout, we assume that arrival and service processes satisfy 
stationary bounds.

\subsection{Traffic and Service in the Bit Domain}

%In any time interval $[\tau, t)$, the channel capacity 
%must be shared by the two types of traffic according to some scheduling rule. 
Consider for the moment a single node.
Dropping subscripts, we write
$$A(\tau,t) = \sum_{i=\tau}^{t-1}  a_i\,, 
\quad \text{and} \quad 
D(\tau,t) = \sum_{i=\tau}^{t-1} d_i\, ,
$$
for the cumulative arrivals and departures, respectively, at the node 
in the time interval $[\tau, t)$, 
where $a_i$ denotes the arrivals and $d_i$ the departures 
in the $i$-th time slot.
Due to causality, we have $D(0,t) \le A(0,t)$.
The  processes lie
in the set $\F$ of  non-negative  
bivariate functions $f(\tau,t)$ that are increasing in the 
second argument and vanish 
unless $0\le\tau<t$.  
The backlog at time $t >0$ is given by
\begin{equation} \label{eq:backlog}
B(t) = A(0,t)- D(0,t) \ , 
\end{equation}
and the delay  at the node is given by  
\begin{equation} \label{eq:delay}
W(t) = \inf \left\{u\ge 0: A(0,t)\le D(0,t+u)\right\} \ . 
\end{equation}

The service of the node in the time interval 
$[\tau, t)$ is given by a random process $S(\tau, t)$, 
such that Eq.~\eqref{eq:conv} holds for 
every arrival process $A$ and the
corresponding departure process $D$.  
This service description with bivariate functions 
is referred to as {\it dynamic server}. 
Initially defined for non-random service~\cite{ChangCruz99}, 
dynamic servers have been extended
to random processes in \cite{CSChang,FidlerMGF}. 

The above model is a typical network-layer model. 
where traffic is measured in bits, and service 
is measured in bits per second. We thus refer to this model 
of arrivals, departures, and service as residing in a {\it  bit domain}.

The network calculus exploits that networks  
which satisfy  the input-output relation of  Eq.~\eqref{eq:conv} with equality 
can be viewed as linear systems in a \minplus dioid algebra 
\cite{leBoudec}.   
In the ($\mathds R \cup \{+\infty \}, \min,+$) 
dioid,  the minimum and addition take the place of the 
standard addition and multiplication operations.
The network calculus is based on the fact
that $(\F, \min,\ast)$ is again a 
dioid~\cite{CSChang}. Note that the min-plus convolution, which 
provides the second operation in the dioid, is not commutative
in $\F$.

%A useful operation in the network calculus
%is the {\it \minplus deconvolution} of two functions
%$f,g\in F$, defined by
%$f\deconv g(\tau,t)=\sup_{u\le \tau} \{f(u,t)- g(u,\tau)\}$.
%A dynamic server $S(\tau,t)$ provides the following
%performance guarantees at a node with arrival process $A(\tau,t)$.
%\begin{itemize}
%\item {\sc Output burstiness.} The departure 
%process is bounded by $D(\tau,t)\le A\deconv S(\tau,t)$.
%\item{\sc Backlog bound.} The backlog process
%is bounded by
%$B(t) \le A\deconv S(0,0)$.
%\item {\sc Delay bound.} The delay
%process is bounded by
%$W(t)\le \inf\{d\ge 0: A\deconv S(t+d,t)\le 0\}$.
%\end{itemize}

\subsection{Service Model of Wireless Channel}
\label{subsec:service-snr}
To compute a service model for a wireless channel, we 
assume that the channel state information is sampled 
at equal time intervals $\Delta t$. With $\Delta t=1$, 
let $\gamma_i$ denote  the instantaneous signal-to-noise 
ratio observed at the receiver 
in the $i$-th sampling epoch. Then, $\gamma_i$ is a nonnegative
random variable that has the probability distribution of the 
underlying fading model. 
We assume that the random variables
$\gamma_i$  are independent and identically distributed. 
This assumption is justified when 
$\Delta t$ is longer than the channel coherence time. Otherwise, the assumption 
will give optimistic bounds. We emphasize that the network calculus  
in this paper applies to settings without independence, however, 
the derivation of performance bounds will proceed differently. 
Using Eq.~\eqref{eq:capacity},
the instantaneous service offered by the channel in the 
$i$-th slot is given by $\log g(\gamma_i)$.
and the corresponding service process is given by    
\begin{equation} \label{eq:cumulative_capacity}
    S(\tau,t) = \sum_{i=\tau}^{t-1}  \log g(\gamma_i)    \,,
\end{equation} 
where we haven chosen units such that
the constant in Eq.~\eqref{eq:capacity} takes the value $c=1$.

The service description in Eq.~\eqref{eq:cumulative_capacity}
requires us to work with the logarithm of 
fading distributions,
which presents a non-trivial technical difficulty
via the usual network calculus or queueing theory.
On the other hand, observe that the exponential
$\S(\tau,t) = e^{ S(\tau,t)}$ is described more simply by
\begin{equation} \label{eq:cumulative_ServiceCapacity}
\S(\tau,t) = \prod_{i=\tau}^{t-1}   g(\gamma_{i}) \ .
\end{equation} 
This  motivates the development of a system model 
that allows us to exploit the more tractable service representation 
in Eq.~\eqref{eq:cumulative_ServiceCapacity}.
In this alternative model, arrivals, departures, and service  reside 
in a different domain, where we can work directly with 
the distribution functions of the fading channel gain and 
the corresponding SNR at the receiver.  
%We can also derive a service characterization for a cascade of multiple nodes
% form of Eq.~\eqref{eq:cumulative_ServiceCapacity}.

\begin{figure}
\centering
\includegraphics [width=4.5in]{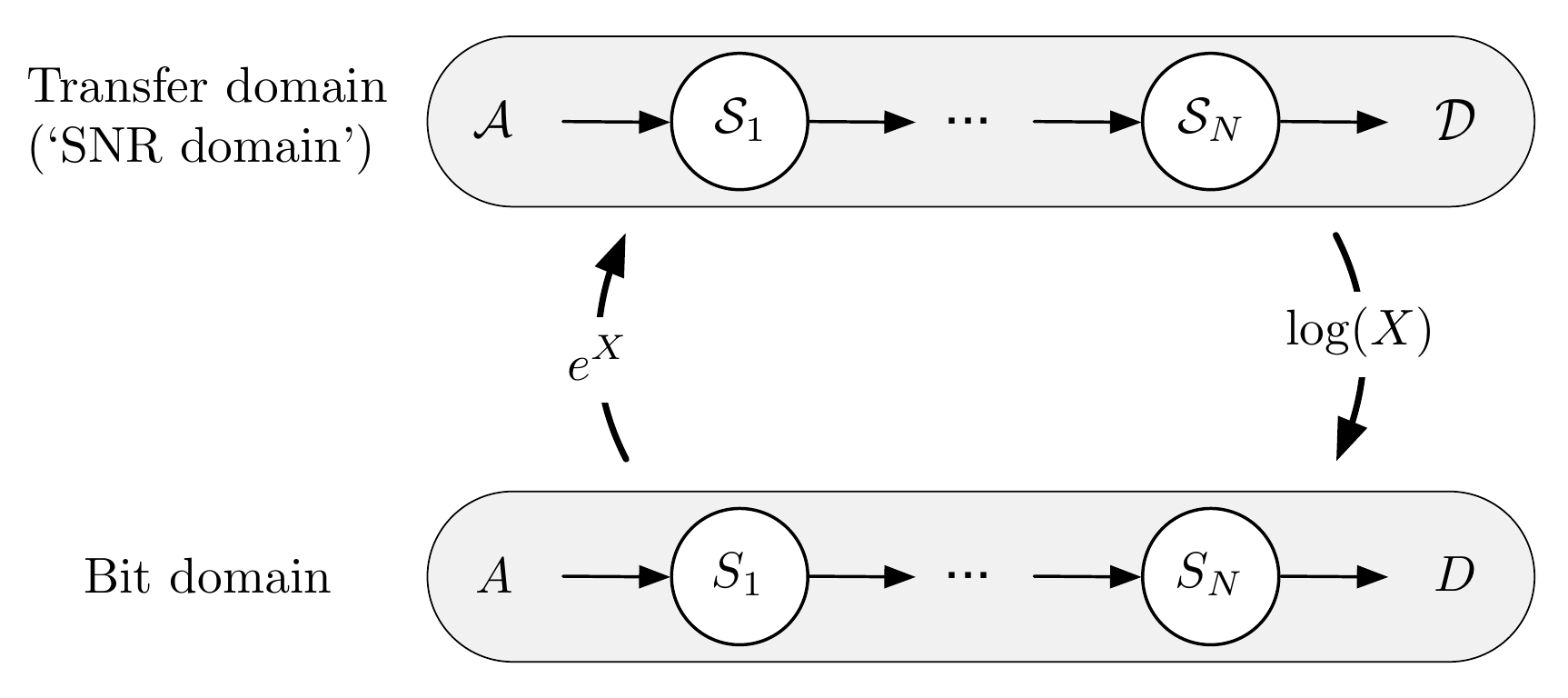}
%\vspace{-2mm}
\caption{Transfer Domain of Network Model.}
\label{fig:domains}
%\vspace{-4mm}
\end{figure}

\subsection{Network Model in the SNR Domain} \label{subsec:traffic-snr}

%We could proceed 
%by analyzing cascades of fading channels in the SNR domain, and mapping 
%the result to the original bit domain, to obtain a dynamic server description 
%of one or several fading channels. However, we will go a 
%step further and 
We now proceed by mapping the network model from 
Fig.~\ref{fig:tandem} into a transfer domain,  which we refer
to as  {\it SNR domain}. We will seek to 
derive performance bounds in the  transfer domain, and 
then map the results to the bit domain to obtain network-layer 
bounds for backlog and delays. 
The relationship of the network models in bit domain and SNR domain  
is illustrated in Fig.~\ref{fig:domains}.  

In the previous subsection, we constructed
the service process for a wireless link in 
the SNR domain in Eq.~\eqref{eq:cumulative_ServiceCapacity} 
as 
$$
\S(\tau,t)=e^{S(\tau,t)} \,.
$$
By analogy, we describe the arrivals and departures in the SNR domain by
$$
\A(\tau,t)\deq e^{A(\tau,t)}\,,\quad \text{and }\quad \D(\tau,t)\deq e^{D(\tau,t)}\,.
$$
Throughout  this paper, we use calligraphic upper-case 
letters to represent processes that characterize 
traffic or service as a function of the instantaneous SNR 
in the sense of  Eq.~\eqref{eq:cumulative_ServiceCapacity}. 
Due to the monotonicity of the exponential function, 
$\D(0,t)$ and $ \A(0,t)$ are increasing  
in $t$, and satisfy the causality property $\D(0,t) \le \A(0,t)$. 
The backlog process is accordingly described by
$$
\B(t)\deq e^{B(t)}=\A(t)/\D(t)\,.
$$
Since time is not affected by this transformation,
the delay is given by
\begin{align}
\label{eq:def-delay}
\W(t)\deq W(t) = \inf\{u\ge 0: \A(t)\le \D(t+u)\}\,.
\end{align}
To interpret these processes in the transfer domain, 
let $\gamma_{a,i}\deq g^{-1}(e^{a_i})$ be the instantaneous channel 
SNR required to transmit $a_i$ in a single time slot,
assuming transmission at the rate of the capacity limit.
The arrival process in the SNR domain can then be expressed
in terms of these variables as
\begin{equation} \label{eq:cumulative_arrivalCapacity}
\A(\tau,t) = \prod_{i=\tau}^{t-1} g(\gamma_{a,i})\,.
\end{equation} 
Here, we are treating channel quality expressed in terms 
of the instantaneous SNR as a commodity. An arrival in a time 
unit represents a 
workload, where  $\gamma_{a,i}$ expresses the amount of 
resources that will be consumed by the workload. 
The backlog can  similarly be expressed in terms of
the instantaneous SNR as
$$ \B( t) = \prod_{i=t}^{t+\tau_B-1} g(\gamma_{i}) \ , $$  
with the interpretation that  a node with backlog 
$B(t)$ at time $t$  requires full use of the channel 
capacity for $\tau_B$ time units to clear the backlog.   

%On a similar note, we can define the {\it SNR backlog} at time $t$ by 
%$\B(t) = e^{  B(t) }$, where $\B(t) $ is expressed in the 
%instantaneous SNR notion as 
%$$ \B( t) = \prod_{  i =t }^{  t+\tau_B-1}   g(\gamma_{i}).$$  
%The intuition behind the expression above is that 
%a node with a backlog $B(t)$ at time $t$ will require full 
%use of the channel capacity for $\tau_B$ period of time, starting from 
%time $t$, in order  to empty its buffer\footnote{Note that in our forthcoming treatments we only use the SNR backlog $\B(t)$ and the relation $B(t) = \log (\B(t))$. The expression relating $\B(t)$ to $\gamma_{i}$ is introduced here to facilitate an intuitive understanding of the nature of $\B(t)$. }. 

Most importantly, the concept of the dynamic server translates to the 
SNR domain. In a network system, the
service process in the bit domain satisfies
Eq.~\eqref{eq:conv} if and only if
the process $\S(\tau,t)=e^{S(\tau,t)}$ in the SNR domain
satisfies
\begin{align}\label{eq:dynamicserver}
        \D(0,t) \ge  \inf_{0\le u \le t} \{ \A(0,u) \cdot \S (u,t)  \}  \ .
\end{align}
We refer to a network element that satisfies Eq.~\eqref{eq:dynamicserver} 
for any sample path as {\it dynamic SNR server}.
In this general setting, we not require that $\S$ takes the form in 
Eq.~\eqref{eq:cumulative_ServiceCapacity}, in particular, $\S(\tau,t)$  
does not be equal to $\S(\tau,u)\cdot\S(u,t)$.

Traffic aggregation in the SNR domain is  expressed in 
terms of a product. When $M$ flows have arrivals at a 
node with arrival processes denoted by $A_{k}, k=1,\ldots, M$, 
then the total arrival, $A_{\rm agg}$, are given for any $ 0 \leq \tau \leq t$ by 
$$
A_{\rm agg} (\tau,t) =  \sum_{k=1}^{M} A_k(\tau,t)  \ .
$$
If we 
let  $\A_{j}$ and $\A_{\rm agg}$ denote the 
corresponding processes in the SNR domain, we see that  
\begin{align*}%\label{eq:AggregateArrivals}
%A_n(0,t) =  \sum_{j=1}^M A_{n,j}(0,t) 
%\ .  \nonumber \\ &\Leftrightarrow &  \log (\A_n(0,t)) =    
%\sum_{j=1}^M \log ( \A_{n,j}(0,t)) \nonumber \\  && =
  % \log \Big (\prod_{j=1}^M \A_{n,j}(0,t) \Big )  \nonumber \\
%\ \  \Leftrightarrow  \ \ 
\A_{\rm agg} (\tau,t) = \prod_{k=1}^M \A_{k}(\tau,t)  \ .
\end{align*}
 
With the above definitions, the usual network description 
by a \minplus dioid algebra in 
the bit domain can be expressed in the SNR domain by 
a dioid algebra on $\F$
where the second operator is a multiplication. 
This enables the development of the 
\mx~network calculus in Sec.~\ref{sec:calculus}. 
We observe that the exponential function
defines a one-to-one correspondence between 
arrival and departure processes in the bit and SNR domains. 
The physical arrival, departure, service, and backlog
processes can be recovered from their counterparts in the
SNR domain by taking a logarithm (see Fig.~\ref{fig:domains}).

\section{Stochastic \mx~~Network Calculus}  \label{sec:calculus}

This section contains our main contribution: an analytical 
framework  for statistical end-to-end performance bounds for a network, 
where %traffic arrivals are given in terms of a probability distribution 
%residing in the bit domain description, and 
service is expressed  in terms of fading distributions 
residing in the SNR domain.

%\footnote{To keep notation to a 
%minimum, we use the same symbol as for the \minplus 
%convolution defined earlier. The meaning of the operator 
%will be clear from the context.}
By an {\it SNR process} we mean a bivariate process 
$\X(\tau,t)$ taking values in $\mathds R^+$
that is increasing in the 
second argument, with $\X(t,t)=1$
for all $t$.  The space of SNR processes will be denoted by $\F^+$.
For any pair of SNR processes 
$\X(\tau,t)$ and $\Y(\tau,t)$, set
\begin{equation} \label{eq:convolution}
\X \conv \Y (\tau,t) \deq 
\inf_{\tau\leq u \le t} \big \{ \X (\tau,u) \cdot \Y (u,t) \big \} \ , 
\end{equation} 
and
\begin{equation} \label{eq:deconvolution}
\X \deconv \Y (\tau,t) 
\deq \sup_{ u \leq \tau} \Big \{ \frac{\X 
(u, t)}{\Y(u,\tau)} \Big \} \,.
\end{equation} 
%\sout{It is easily verified (see \cite{ZuLiBu12a}) 
%that (${\F}, \min,\conv$) is a dioid.}
We refer to `$\conv$' and `$\deconv$' as the the \mx~convolution and
\mx~deconvolution operators, respectively. 

The arrival, departure, and service processes constructed in
the previous section are SNR processes.
With the \mx~convolution, we can express the defining
property of a dynamic SNR server from
Eq.~\eqref{eq:dynamicserver} as
\begin{align} \label{eq:conv-mx}
D(0,t) \geq A \conv S (0,t) \  
\end{align}
for every pair of SNR arrival and departure processes $\A(\tau,t)$
and $\D(\tau,t)$.

\subsection{\mx~Dioid Algebras} 
We note that, in fact, for any system  description in the bit domain by   
the  ($\mathds R \cup \{+\infty \}, \min,+$) and  the  
(${\F}, \min,\ast$) dioid algebras there exists a corresponding  
characterization in the SNR domain using    
($\mathds R^+ \cup \{+\infty \}, \min,\times$) and 
(${\F^+}, \min,\conv$)~dioids. 
The following argument confirm that the properties of a dioid are satisfied. 

\bigskip
\begin{lemma} 
\label{lm:mx-dioid}
$(\mathds R^+ \cup \{+\infty \}, \min,\times)$ is a dioid. 
\end{lemma}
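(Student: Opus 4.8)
The plan is to verify directly the axioms of a dioid (an idempotent semiring) for $(\mathds{R}^+ \cup \{+\infty\}, \min, \times)$, reading $\min$ as the additive operation $\oplus$ and $\times$ as the multiplicative operation $\otimes$. The first step is to pin down the two neutral elements: the additive identity must satisfy $\min\{a,\varepsilon\}=a$ for all $a$, which forces $\varepsilon = +\infty$ since $+\infty$ is the top element of the order, while the multiplicative identity is $e=1$ because $a\times 1 = a$. With these identifications fixed, the remaining axioms can be checked one at a time.

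First I would confirm that $(\mathds{R}^+ \cup \{+\infty\}, \min)$ is a commutative monoid: $\min$ is closed, associative, and commutative on any totally ordered set, and $+\infty$ is its identity. Its idempotency, $\min\{a,a\}=a$ --- the property that promotes a semiring to a dioid --- is immediate. Next, $(\mathds{R}^+ \cup \{+\infty\}, \times)$ is a monoid with unit $1$ by the usual associativity of real multiplication. For distributivity I would use that multiplication by a nonnegative $a$ is order-preserving, so $a\times\min\{b,c\}=\min\{ab,ac\}$; since $\times$ is moreover commutative, the right-distributive law follows at once. Finally I would check that the additive identity $\varepsilon=+\infty$ is absorbing, i.e.\ $a\times(+\infty)=(+\infty)\times a=+\infty$ for every $a$.

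The main obstacle is the one boundary product that ordinary arithmetic leaves undefined, namely $0\times(+\infty)$ (this arises exactly when $0\in\mathds{R}^+$; if $\mathds{R}^+$ is taken to exclude $0$ the issue does not occur and the argument is routine). The dioid structure dictates the resolution: absorption of the zero element forces the convention $0\times(+\infty)=+\infty$. Having adopted it, I would revisit only the axioms this convention can affect --- associativity of $\times$ and the distributive law in the cases where one factor is $0$ and another $+\infty$ --- and verify case by case that each still holds (for instance, for finite $b$ both $0\times\min\{b,+\infty\}$ and $\min\{0\times b,\,0\times(+\infty)\}$ reduce to $0$). As an independent check I would observe that $x\mapsto\log x$ is an order isomorphism carrying this structure onto the classical (completed) $(\min,+)$ dioid, sending $\min$ to $\min$ and $\times$ to $+$, under which the convention $0\times(+\infty)=+\infty$ corresponds precisely to $(-\infty)+(+\infty)=+\infty$; this transport reconfirms every axiom and isolates the boundary convention as the only point that must be fixed by hand.
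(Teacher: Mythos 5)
Your proposal is correct and takes essentially the same route as the paper's own proof: a direct axiom-by-axiom verification with $+\infty$ as the null element and $1$ as the unit of multiplication. The one point where you go beyond the paper is your explicit treatment of the boundary product $0\times(+\infty)$ via the convention $0\times(+\infty)=+\infty$ (together with the $\log$-isomorphism cross-check onto the completed $(\min,+)$ dioid); the paper silently avoids this case, consistent with its later remark that functions in $\F^+$ are strictly positive, so your handling is a welcome refinement rather than a divergence.
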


\bigskip
\begin{proof}
We show that ($\mathds R^+ \cup \{+\infty \}, \min,\times$) satisfies the dioid axioms. 
For  $a,b,c \in \{\mathds R^+ \cup \{+\infty \} \}$:

\bigskip
\begin{enumerate}
\item[(1)] \textit{Commutativity of $\min$:} $\min (a,b) =\min (b,a)$.
\item[(2)] \textit{Associativity of $\min$:} 
$\min (\min (a,b), c) =\min(a, \min (b,c))$.
\item[(3)] \textit{Idempotency of $\min$:} $\min (a,a) =a$.
\item[(4)] \textit{Associativity of $\times$:} $ (a\times b)  \times c= a \times  (b\times c)$.
\item[(5)] \textit{Distributivity of $\times$:} 
$ \min (a, b)  \times c= \min (a \times c,  b \times c)$.
\item[(6)] \textit{Existence of a null element:} The null element is $+\infty$. 
since $\min (+\infty, a) =a$.
\item[(7)] \textit{Absorption of the null element:}\ $ (+\infty) \times a =a \times (+\infty) =  +\infty $.
\item[(8)] \textit{Existence of a unity element:} The unit of multiplication is 
1, since $ 1 \times a =a \times 1 = a$.

\end{enumerate}
\end{proof}

\begin{lemma} 
$(\F^+, \min,\conv)$ is a dioid. 
\end{lemma}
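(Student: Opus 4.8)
The cleanest route, and the one I would take first, is \emph{transport of structure} along the exponential map, exploiting that $(\F,\min,\ast)$ is already known to be a dioid (as recalled above). Define $\phi:\F\to\F^+$ pointwise by $\phi(f)(\tau,t)=e^{f(\tau,t)}$, extended to allow the value $+\infty$ with the conventions $e^{+\infty}=+\infty$ and $\log(+\infty)=+\infty$. Since $x\mapsto e^x$ is a strictly increasing bijection from $\mathds R\cup\{+\infty\}$ onto $\mathds R^+\cup\{+\infty\}$ with inverse $\log$, the map $\phi$ is an order-preserving bijection of $\F$ onto $\F^+$ (nonnegativity $f\ge 0$ becomes $\X\ge 1$, the normalization $f(t,t)=0$ becomes $\X(t,t)=1$, and monotonicity in the second argument is preserved). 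Because $\phi$ is increasing and bijective it commutes with pointwise $\min$, i.e.\ $\phi(\min(f,g))=\min(\phi(f),\phi(g))$, and it carries the bit-domain null ($+\infty$ everywhere) and the convolution unit of $(\F,\min,\ast)$ to the process $\mathsf 0\equiv+\infty$ and to the impulse $\mathsf 1_\conv$ that equals $1$ on the diagonal and $+\infty$ off it.

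The one identity requiring genuine verification—and the step I expect to be the main obstacle—is the intertwining of the two products, namely $\phi(f\ast g)=\phi(f)\conv\phi(g)$. This reduces to the scalar fact that the strictly increasing, continuous exponential passes through an infimum and turns sums into products: $e^{\inf_u\{f(\tau,u)+g(u,t)\}}=\inf_u e^{f(\tau,u)+g(u,t)}=\inf_u\{e^{f(\tau,u)}\cdot e^{g(u,t)}\}$, which is exactly $(\phi(f)\conv\phi(g))(\tau,t)$ by the definition \eqref{eq:convolution}. Once this intertwining is in hand, every dioid axiom for $(\F^+,\min,\conv)$ is obtained by applying $\phi$ to the corresponding axiom of $(\F,\min,\ast)$; no further computation is needed, since $\phi$ is an isomorphism of the two algebraic structures.

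Should a direct verification mirroring the proof of Lemma~\ref{lm:mx-dioid} be preferred, I would check the eight axioms in place. Commutativity, associativity, and idempotency of the additive operation hold pointwise, inherited from the scalar identities of Lemma~\ref{lm:mx-dioid}. Distributivity of $\conv$ over $\min$ (on both sides, since $\conv$ is noncommutative) follows from the scalar distributivity $\min(a,b)\times c=\min(a\times c,b\times c)$ of Lemma~\ref{lm:mx-dioid} together with the elementary fact that an infimum commutes with a finite minimum, $\inf_u\min\{f(u),g(u)\}=\min\{\inf_u f(u),\inf_u g(u)\}$. Absorption holds because $\X\conv\mathsf 0=\mathsf 0\conv\X=\mathsf 0$, and $\mathsf 1_\conv$ is a two-sided unit because the infimum defining $\X\conv\mathsf 1_\conv(\tau,t)$ is attained at $u=t$ (giving $\X(\tau,t)$, all other terms being $+\infty$), and symmetrically at $u=\tau$ for $\mathsf 1_\conv\conv\X$.

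In this direct route the crux is associativity of $\conv$. To prove $(\X\conv\Y)\conv\Z=\X\conv(\Y\conv\Z)$ I would expand both sides via \eqref{eq:convolution} into nested infima and use the key enabling fact that multiplication by a nonnegative scalar commutes with infimum, $c\cdot\inf_i x_i=\inf_i(c\cdot x_i)$ for $c\ge 0$ (with $c\cdot(+\infty)=+\infty$ when $c>0$), the \mx~analog of the additivity used for $\ast$. Pulling the outer factor inside on each side collapses both to the single infimum over the ordered triple,
\[
\inf_{\tau\le u\le v\le t}\big\{\X(\tau,u)\cdot\Y(u,v)\cdot\Z(v,t)\big\}\,,
\]
which proves the identity. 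That SNR processes take values $\ge 1>0$ (channel gains satisfy $g(\gamma)\ge 1$) is precisely what keeps $c>0$ and makes the scalar--infimum interchange valid; the only delicate point is the bookkeeping with the value $+\infty$, handled by the stated conventions. With all axioms confirmed, $(\F^+,\min,\conv)$ is a dioid.
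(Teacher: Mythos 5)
Your direct verification (your second route) is essentially the paper's own proof: the paper likewise inherits commutativity, associativity, and idempotency of $\min$ pointwise from the scalar \mx~dioid, proves associativity of $\conv$ by collapsing the nested infima into a single infimum over ordered pairs $\tau\le s\le u\le t$, exhibits the same null element $N\equiv+\infty$ and the same unit $\Delta$ (equal to $1$ on and below the diagonal, $+\infty$ above), and notes exactly as you do that strict positivity of elements of $\F^+$ is what makes absorption of $+\infty$ work. Your version is in fact marginally more careful, since you check distributivity and the unit on both sides, which matters because $\conv$ is noncommutative, while the paper verifies only one side. Your preferred first route---transport of structure along $\phi(f)=e^{f}$---is genuinely different from anything in the paper and is attractive: the single intertwining identity $\phi(f\ast g)=\phi(f)\conv\phi(g)$, which holds because the increasing continuous exponential passes through the infimum and turns sums into products, delivers all eight axioms at once from the known dioid $(\F,\min,\ast)$, whereas the paper reproves everything from scratch. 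One caveat: as the paper defines them, $\F$ consists of \emph{nonnegative} functions while $\F^+$ consists of processes valued in $\mathds R^+$, so $\phi$ is a bijection from $\F$ onto the sub-structure $\{\X\in\F^+:\X\ge 1\}$ rather than onto all of $\F^+$---you observe this yourself (``$f\ge 0$ becomes $\X\ge 1$'') but then assert surjectivity onto $\F^+$. To cover processes taking values in $(0,1)$ you must either extend the \minplus~dioid statement to real-valued (possibly negative) functions, which holds by the same argument but is not literally the cited lemma, or fall back on your direct verification, which needs only strict positivity. Relatedly, your closing remark that values $\ge 1$ are ``precisely what keeps $c>0$'' overstates what is needed: the interchange $c\cdot\inf_i x_i=\inf_i(c\cdot x_i)$ requires only $c>0$ together with the stated $+\infty$ conventions, so the positivity built into $\F^+$ suffices and no appeal to $g(\gamma)\ge 1$ is necessary.
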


\bigskip
\begin{proof}
Given bivariate functions $\X, \Y, \Z \in \F^+$. 
Since the $\min$ operation is a pointwise minimum,  properties 
of the $\min$ operation, that is, properties (1)--(3) from the proof of Lemma~\ref{lm:mx-dioid}, follow from the $(\mathds R^+ \cup \{+\infty \}, \min,\times)$ dioid.  
For the remaining properties we have 

\bigskip
\begin{itemize}
\item[(4)] \textit{Associativity of $\conv$:} 
\begin{align*}
(\X\conv \Y)  \conv \Z (\tau, t)  &  \\
& \hspace{-2cm} = \inf_{\tau\leq u \le t} \bigl\{ 
\inf_{\tau\leq s \le u} \bigl\{
\X (\tau,s) \cdot \Y (s,u) \bigr\} 
\cdot  \Z (u,t)  \bigr\} \\
& \hspace{-2cm} = \inf_{\tau\leq s \le u \le t} \bigl\{ 
\X (\tau,s) \cdot \Y (s,u)  \cdot  \Z (u,t)  \bigr\} \\
&  \hspace{-2cm}= 
\inf_{\tau\leq s \le t} \bigl\{ 
\X (\tau,s)  \cdot \inf_{s\leq u \le t} \bigl\{ \Y (s,u) 
\cdot  \Z (u,t) \bigr\}  \bigr\} \\
&  \hspace{-2cm}= \X\conv (\Y  \conv \Z) (\tau, t)  \ . 
\end{align*}
\item[(5)] \textit{Distributivity of $\conv$ over finite sums:} 
\begin{align*}
\min (\X, \Y)  \conv \Z (\tau, t) & \\
& \hspace{-2cm} = \inf_{\tau\leq s \le t} \bigl\{
\min (\X, \Y) (\tau,s) \cdot \Z (s,t) \bigr\} \\
& \hspace{-2cm} = \inf_{\tau\leq s \le t} \bigl\{
\min \bigl( \X (\tau,s) \cdot \Z (s,t) , \Y (\tau,s) \cdot \Z (s,t) \bigr)\bigr\} \\
%& \hspace{-2cm} = \min \Bigl( \inf_{\tau\leq s \le t} \bigl\{
%\min \X (\tau,s) \cdot \Z (s,t) \bigr\} , \inf_{\tau\leq s \le t} \bigl\{
%\min \Y (\tau,s) \cdot \Z (s,t) \bigr\} \Bigr) \\
& \hspace{-2cm} = \min \left( \X\conv \Y (\tau, t) , \Y  \conv \Z (\tau, t)\right) \, .
\end{align*}
\item[(6)] \textit{Existence of a null element:} The null element is $N(\tau, t)=+\infty$ for all values of $\tau$ and $t$.

\item[(7)] \textit{Absorption of the null element:} 
\begin{align*}
N \conv \X (\tau, t) = \inf_{\tau\leq u \le t} \big \{ (+\infty) \cdot \X (u,t) \big \} = +\infty \, . 
\end{align*}
Note that functions in $\F^+$ are strictly positive by definition.  

\item[(8)] \textit{Existence of a unity element:} The unity element 
is $\Delta(\tau, t)$, where 
\[
\Delta(\tau,t) = 
\begin{cases}
1 & \tau \geq t \, , \\
\infty & \tau < t \ . 
\end{cases}
\]
This gives 
\begin{align*}
\Delta  \conv \X (\tau, t) &  = \inf_{\tau\leq s \le t} \bigl\{
\Delta (\tau,s) \cdot \X (s,t) \bigr\} \\
&  = 
\Delta (\tau,\tau) \cdot \X (\tau,t)  \\
&  = 
 \X (\tau,t)  \, . 
 \end{align*}
\end{itemize}
\end{proof}

\subsection{Server Concatenation and Performance Bounds} 
The existing network calculus in the bit domain 
allows for the concatenation of tandem service elements using the 
\minplus~convolution (see page~1).
As an immediate consequence, single node performance bounds 
are extended to a multi-hop setting. 
We now show establish the corresponding result in 
the \mx~network calculus. Specifically, the concatenation of dynamic SNR servers is 
again a dynamic SNR server. We will prove the result for a tandem network of two nodes, as shown in Fig.~\ref{fig:concat}. 

%Specifically, let $\S_1 (\tau,t)$ and $\S_2 (\tau,t)$ be 
%two dynamic SNR servers in tandem as shown in Fig.~\ref{fig:concat}. 
%Then the service offered by the network as a whole 
%is given by the dynamic SNR server $\S(\tau,t)$ with 
%\begin{equation}\label{eq:concatenation}
%\S_{\rm net}(\tau,t) = \S_1 \conv \S_2(\tau,t)\,.  
%\end{equation}
%By iteration, the service offered by a network with $N$ service elements
%is given by $\S(\tau,t)=\S_1\conv \cdots\conv S_N(\tau,t)$,
%where $\S_n(\tau,t)$ are the  dynamic SNR servers description
%for the individual ones.
%The proof, given in \cite{ZuLiBu12a}, uses
%similar  arguments as in the \minplus algebra, but
%with sums replaced by products and occurrences of 0 replaced by 1.  

%--begin techreport------
%\begin{comment}
%{\color{red} {\sc [This replaces previous paragraph]$\rightarrow$}

\begin{lemma}\label{thm:conc}
Let $\S_1 (\tau,t)$ and $\S_2 (\tau,t)$ be two dynamic SNR servers in tandem as 
shown in Fig.~\ref{fig:concat}. Then, the service offered by the tandem of 
nodes   
is given by the dynamic SNR server $\S_{\rm net}(\tau,t)$ with 
\begin{equation*}%\label{eq:concatenation}
\S_{\rm net}(\tau,t) = \S_1 \conv \S_2(\tau,t) \ .  
\end{equation*}
\end{lemma}

\begin{proof}
Using Eq.~\eqref{eq:dynamicserver}, the departure process $\D(0,t) $ can be written as 
 \begin{eqnarray*}%\label{eq:concatproof1}
 \D(0,t) &\ge& \inf_{0\le u\le t} \{ \A_2(0,u) \cdot \S_2 (u,t) \} \nonumber \\
% &=& \inf_{0\le u\le t} \{ \D_1(0,u) \cdot \S_2 (u,t) \} \nonumber \\
 & \ge&  \inf_{0\le u \le t} \big\{  \inf_{0\le \tau \le u} \{ \A(0,\tau) \cdot  \S_1 (\tau,u) \} \cdot \S_2 (u,t)  \big  \} \nonumber \\
  & = & \!\!\!\! \inf_{0\le \tau \le t} \big\{ \A(0,\tau) \cdot \!\!  \inf_{\tau \le u \le t} \{ \S_1 (\tau,u)   \cdot \S_2 (u,t) \} \big  \} \\
  & = & \!\!\!\! \inf_{0\le \tau \le t} \big\{ \A(0,\tau) \cdot (\S_1 \conv \S_2) (\tau,t) \big  \} \,. \\[-0.5cm]
 \end{eqnarray*}
%The above  is  equivalently  expressed by 
%\begin{eqnarray*}%\label{eq:concatproof2}
%\D(0,t)\ge    \inf_{0\le \tau \le t}  \{\A(0,\tau) \cdot    \S_{\rm net} (\tau,t)   \} \ ,
%\end{eqnarray*}
%where
%\begin{eqnarray*}%\label{eq:concatproof3}
% \S_{\rm net} (\tau,t) =  \S_1 \conv \S_2 (\tau,t) \} \ .
%\end{eqnarray*}
%From Eq.~\eqref{eq:dynamicserver}  we conclude that $\S (\tau,t)$ is a dynamic SNR server for the two-hop network and Theorem~\ref{thm:conc} follows.
\end{proof}

The extension to  networks with more than two nodes follows by iteratively applying  Lemma~\ref{thm:conc}. Hence, the  dynamic network SNR server with $N$~nodes in tandem is given 
by 
\begin{equation}\label{eq:concatenationN}
\S_{\rm net}(\tau,t) =   \S_1  \conv \S_2  \conv \cdots \conv \S_N (\tau,t) \ . 
\end{equation}
%\end{comment}

%--end-------
\begin{figure}[t]
\centering
\includegraphics [width=4in]{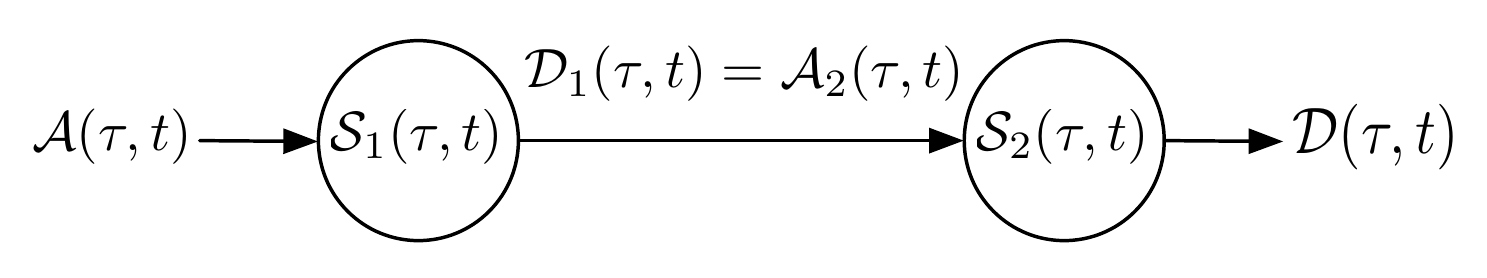}
%\vspace{-2mm}
\caption{Tandem of dynamic SNR servers.}
\label{fig:concat}
%\vspace{-4mm}
\end{figure}

Performance bounds in the \mx~network calculus 
are computed with the  \mx~deconvolution 
operator. This is analogous to role of the \minplus~deconvolution 
in the existing \minplus~network calculus. 
The bounds are expressed in the following lemma. 
%We present the results without proof and refer to \cite{ZuLiBu12a}. 

%Given a system with SNR arrival process 
%$\A(\tau,t)$ and  dynamic SNR server $\S(\tau,t)$. 
%\begin{itemize}
%\item {\sc Output Burstiness.} 
%The SNR departure process %$\D(\tau,t)$ 
%is bounded by $\D(\tau,t) \le \A \deconv \S (\tau, t)$.
%
% \item {\sc Backlog Bound.} 
% The SNR backlog process %$\B(t)$ 
%is bounded by $\B(t) \le \A \deconv \S (t, t)$.
%
% \item {\sc Delay Bound.} 
%The  delay process %$\W(t)$ 
%is bounded by $ W(t) \le 
% \inf \Big \{d \ge 0: \A \deconv \S (t,t-d)\le 1  \Big \}$. 
%\end{itemize}
%
%%--begin techreport------
%
%\hrule 
%%\begin{comment}
%{\color{red}
%
%%\subsection{Performance Bounds}
%Performance bounds in the SNR domain are computed with  the \mx deconvolution.
%The notation corresponds  to the well-known performance bounds 
%in the bit domain in the \minplus network calculus. 

\begin{lemma}\label{lem:OBbound}
Given a system with SNR arrival process $\A(\tau,t)$ and  dynamic SNR server $\S(\tau,t)$. 
\begin{itemize}
\item {\sc Output Burstiness.} 
The SNR departure process %$\D(\tau,t)$ 
is bounded by $\D(\tau,t) \le \A \deconv \S (\tau, t)$.

 \item {\sc Backlog Bound.} 
 The SNR backlog process %$\B(t)$ 
is bounded by $\B(t) \le \A \deconv \S (t, t)$.

 \item {\sc Delay Bound.} 
The  delay process %$\W(t)$ 
is bounded by $ \W(t) \le 
 \inf \Big \{d \ge 0: \A \deconv \S (t+d,t)\le 1  \Big \}$. 
\end{itemize}
\end{lemma}

%-----Proof: Output Burstiness Bound-------------------
\begin{proof}
For the output bound, we fix $\tau$ and $t$ with $0 \le \tau \le t$ and derive 
\begin{eqnarray*} \label{eq:depart2}
 \D(\tau,t) &=& \frac{\D(0,t)}{ \D(0,\tau)} \le \frac{\A(0,t)}{ \D(0,\tau)}  \nonumber \\
 &\leq& \sup_{0 \leq u \le\tau  } \Big \{ \frac{\A(0,t)}{\A(0, u) \cdot \S(u,\tau) } \Big \} \nonumber \\
 &=&  \sup_{0 \leq u \le \tau  } \Big \{\frac{ \A(u,t)}{\S(u, \tau)} \Big \}  \, ,
\end{eqnarray*} 
where we used the inequality $\D(0,\tau) \ge  \A \conv \S(0,\tau)\}$ in the second line.  

%-----end Proof: Output Burstiness Bound-------------------
%\begin{proof}[Proof of the backlog bound]
For any fixed sample path, fix an arbitrary $t \geq 0$. 
The bound on the backlog is derived by 
\begin{eqnarray*} \label{eq:Backlog10}
 \B(t) &=& \frac{\A(0,t)}{ \D(0,t)} \nonumber \\
 &\leq& \sup_{0 \leq u \leq t} \Big \{ \frac{\A(0,t)}{\A(0,u) \cdot \S(u, t) } \Big \} \nonumber \\
 &=&  \sup_{0 \leq u \leq t} \Big \{\frac{ \A(u,t)}{\S(u, t)} \Big \} ,
% \nonumber \\
% &=&  \sup_{0 \leq\tau < t} \Big \{ \log \Big (\frac{\A(\tau,t)}{\S(\tau, t)} \Big ) \Big \} \nonumber\\
% &=&   \log \Big ( \sup_{0 \leq\tau < t} \Big \{\frac{\A(\tau,t)}{\S(\tau, t)} \Big \} \Big )  \nonumber,
\end{eqnarray*} 
where we used  $\D(0,t) \ge  \A \conv \S(0, t)\}$ in the second step. 
%\end{proof}

%-----Proof: Delay Bound-------------------
Recall that the delay is invariant under the transform of domains, that is, $\W (t) = W(t)$. By definition of the delay 
in Eq.~\eqref{eq:def-delay}, a delay bound $w$ satisfies
\begin{eqnarray} \label{eq:delayBound}
 \W(t) &=&  \inf \Big\{w \ge 0:   \frac{\A(0,t)}{\D(0,t+w)}\le 1 \Big\} \nonumber \\
 &\leq& \!\!\! \inf \Big \{w \ge 0: \!\!\!\sup_{0 \leq u \leq t} \Big \{ \frac{\A(0,t)}{\A(0,u) \cdot \S(u, t+w) } \Big \}\le 1  \Big \}  \nonumber \\
 &=&  \inf \Big \{w \ge 0: \sup_{0 \leq u \leq t} \Big \{ \frac{\A(u,t)}{  \S(u, t+w) } \Big \}\le 1  \Big \} .
% &=&   \inf \Bigg \{d \ge 0: \sup_{0 \leq u < t+d} \Big \{ \log \Big (\frac{\A(u,t)}{\S(u, t+d)} \Big ) \Big \} \le 0 \Bigg \} \nonumber \\
% &=&   \inf \Big \{d \ge 0:   \A \oslash \S( t+d, t)   \le 1\Big \} \, .
\end{eqnarray} 
where we used the inequality $\D(0,t+w) \ge  \A \conv \S (0, t+w)\}$ in the second line.  
%-----end Proof: Delay  Bound-------------------
\end{proof}
%\end{comment}

With an algebraic description for network performance bounds in 
the SNR domain in hand, we now turn to the problem of computing the bounds. 

\subsection{The Mellin Transform in the SNR domain}
The concise (and familiar)  expressions 
from the previous section for the network service and performance bounds in
the SNR domain hide the difficulty of computing the expressions. 
In fact, all expressions of the \mx~network calculus 
contain products or quotients  of  random variables.  
The {\it Mellin transform} \cite{Davies} facilitates such computations,
particularly when the arrival and service processes 
are independent. %We first recall the definition and collect some useful properties.

The Mellin transform of a nonnegative random variable $X$ is
defined by
\begin{equation} \label{eq:Mt2}
  \M_{X} (s )= E[X^{s-1}]
\end{equation}   
for any complex number $s$ such that this expected value exists.

%{\color{Magenta} 
%\sout{ If $f$ is the probability
%density function of $X$, we can
%write the Mellin transform as an integral~}\cite{Davies,Galambos} 
% $$
%\sout{ \M_{X} (s ) = \int_0^{\infty} x^{s-1} f(x) \, dx\,.}
%$$ 
%\sout{ Conversely, the probability density of $X$ can (under 
%suitable assumptions on its distribution) be recovered
%from its Mellin transform with the help of a complex
%contour integral.}

Among its many properties, we will exploit that the 
Mellin transform  of a product of two independent 
random variables $X$ and $Y$ equals the product of their  
Mellin transforms, 
\begin{equation} \label{eq:Property3MT}
 \M_{X\cdot Y} (s ) = \M_{X} (s ) \cdot \M_{Y} (s ) .
 \end{equation}
Similarly, the Mellin transform of the quotient of independent 
random variables is given by 
\begin{equation} \label{eq:Property4MT}
 \M_{X/Y} (s ) = \M_{X} (s ) \cdot \M_{Y} (2-s ) . 
 \end{equation}

We will evaluate the Mellin transform only for
$s\in\mathds R$, where it is always well-defined 
(though it may take the value $+\infty$). 
%, and its value is just the $(s-1)^{st}$ moment of  $X$.
When $s>1$, the Mellin transform is order-preserving,
i.e., for any pair of random variables $X,Y$ with $\P(X>Y)=0$
we have $\M_X(s)\le \M_Y(s)$ for all $s$.
When $s<1$, the order is reversed.
Hence bounds on the distribution
of a random variable $X$ generally
imply bounds on its Mellin transform.

A more subtle question is how to obtain bounds on the
distribution of a random variable from its Mellin 
transform. Here, the complex inversion formula is not helpful.
Instead, we will use the moment bounds
\begin{align}
\label{eq:moment1}
\P (   X   \geq a ) &\le a^{-s}\M_X(1+s) %\\
%\label{eq:moment2}
%\P (   X   \leq a ) &\le a^{s}\M_X(1-s) 
\end{align} 
for all $a>0$ and $s>0$. 
%The bounds can be improved by optimizing over the value of $s$.
For bivariate random processes $\X(\tau,t)$, we will
write $\M_{\X} (s,\tau,t)\deq \M_{\X (\tau,t)} (s)$.

%------------------------------------------------------------
%\subsection{Bounds on $\conv$ and $\deconv$}

In our calculus, we work with the Mellin transform
of \mx~convolutions and deconvolutions, 
which not only involves 
products and quotients, but also requires to compute infimums 
and supremums.  The computation of the exact Mellin transform 
for these operations is generally not feasible. 
We therefore resort to bounds, as specified  in 
the next lemma. 

\begin{lemma}\label{lem:MTConvDecon}
Let $\X(\tau,t)$ and 
$\Y(\tau,t)$ be two independent nonnegative bivariate
random processes.
For $s<1$, the
Mellin transform of the \mx~convolution $\X\conv\Y(\tau,t)$
is bounded by
 \begin{equation} \label{eq:convMT}
 \M_{ \X \conv  \Y} (s, \tau, t ) \le \sum_{u=\tau}^{ t}  
\M_{ \X} (s, \tau,u ) \cdot \M_\Y (s, u, t )\,.
\end{equation}  
For $s>1$, the Mellin transform of the \mx~deconvolution 
$\X\deconv\Y(\tau,t)$ is bounded by
\begin{equation} \label{eq:deconvMT}
 \M_{ \X \deconv  \Y} (s, \tau, t ) \le 
\sum_{  u =0}^{\tau}  
{\M_{ \X} (s, u,t ) }\cdot { \M_{ \Y} (2\!-\!s, 
u, \tau )} \,.
\end{equation}  
\end{lemma}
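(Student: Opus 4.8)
The plan is to handle both bounds with one template built from three moves: (i) push the power $(\cdot)^{s-1}$ through the extremization in the definition of the operator using monotonicity, (ii) replace the resulting extremum over the finite, discrete time index by a sum of nonnegative terms, and (iii) take expectations and factor each summand by independence together with the multiplicative formulas in Eq.~\eqref{eq:Property3MT} and Eq.~\eqref{eq:Property4MT}. The sign of $s-1$ is what decides which operator each regime serves, and it is the crux of the whole argument.

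For the convolution bound I start from the definition in Eq.~\eqref{eq:convolution},
$$\M_{\X \conv \Y}(s,\tau,t) = E\Big[\big(\inf_{\tau \le u \le t}\{\X(\tau,u)\cdot\Y(u,t)\}\big)^{s-1}\Big].$$
Because $s<1$, the map $x\mapsto x^{s-1}$ is strictly decreasing on $(0,\infty)$; since the index set $\{\tau,\dots,t\}$ is finite, the infimum is attained and the decreasing power converts it into a supremum, so the integrand equals $\sup_{\tau \le u \le t}\{(\X(\tau,u)\cdot\Y(u,t))^{s-1}\}$. Each term here is nonnegative, so the supremum is dominated by $\sum_{u=\tau}^{t}(\X(\tau,u)\cdot\Y(u,t))^{s-1}$. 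Taking expectations, using linearity, and invoking independence of $\X$ and $\Y$ with Eq.~\eqref{eq:Property3MT} factors the $u$-th term into $\M_{\X}(s,\tau,u)\cdot\M_{\Y}(s,u,t)$, which is precisely Eq.~\eqref{eq:convMT}.

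The deconvolution bound follows the same path, but now $s>1$ makes $x\mapsto x^{s-1}$ increasing, so the power passes directly inside the supremum in Eq.~\eqref{eq:deconvolution}, giving $\sup_{0 \le u \le \tau}\{(\X(u,t)/\Y(u,\tau))^{s-1}\}$ (the ratios are well defined since the processes are strictly positive). Bounding this supremum over $u\in\{0,\dots,\tau\}$ by the corresponding sum and taking expectations leaves $\sum_{u=0}^{\tau} E[(\X(u,t)/\Y(u,\tau))^{s-1}]$. Applying the quotient formula Eq.~\eqref{eq:Property4MT} to each independent pair yields $\M_{\X}(s,u,t)\cdot\M_{\Y}(2-s,u,\tau)$, matching Eq.~\eqref{eq:deconvMT}; the shifted argument $2-s$ appears because division by $\Y$ carries the exponent $s-1$ to $-(s-1)=(2-s)-1$.

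The step I would watch most carefully is the matching between the sign of $s-1$ and the direction of the extremum, since this is exactly what pins the convolution estimate to $s<1$---where a decreasing power turns the awkward infimum into a supremum that a union-type bound can control---and the deconvolution estimate to $s>1$. The only inequality-generating step is the passage from the extremum to the sum; it is a union bound, valid because the time index ranges over a finite set and the summands are nonnegative, and it is the sole source of looseness, every other manipulation being an exact identity.
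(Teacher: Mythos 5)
Your proof is correct and follows essentially the same route as the paper's: exploit the monotonicity of $z\mapsto z^{s-1}$ (decreasing for $s<1$, increasing for $s>1$) to swap the extremum and the power, bound the supremum over the finite time index by a sum of nonnegative terms, and then factor each expectation by independence, which is exactly the content of Eqs.~\eqref{eq:Property3MT} and~\eqref{eq:Property4MT}. Your closing remark correctly identifies the sum-over-supremum step as the sole source of slack, matching the paper's argument in every essential respect.
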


\begin{proof} Note that the function $f(z)= z^{s-1}$
is increasing for $s>1$ and decreasing for $s<1$.
For $s<1$, the convolution is estimated by 
\begin{align*} %\label{eq:convMT1}
\M_{ \X \conv  \Y} (s, \tau, t ) 
& =    E \Bigl [ \Bigl (\ \inf_{\tau\le u \le t} 
\{  \X(\tau, u) \cdot  \Y (u,t) \} \Bigr)^{s-1} \Bigr]\\
& \hspace{-2cm}
= E \Bigl [\ \sup_{\tau\le u \le t} 
\{  (\X(\tau, u))^{s-1} \cdot  (\Y (u,t))^{s-1} \} 
\Bigr]\\
 & \hspace{-2cm}
\le \sum_{u=\tau}^{t} E \big[ (  \X(\tau, u)  )^{s-1} 
\big] \cdot E \big[  (    \Y (u,t)  )^{s-1} \big]\,.
\end{align*} 
In the last line, we have used the non-negativity of $\X$ and $\Y$
to replace the supremum with a sum, and 
their independence to evaluate the expectation of the products.
Eq.~\eqref{eq:convMT} follows by inserting the
definition of the Mellin transform.
The deconvolution is similarly estimated for $s>1$ by
\begin{align*} %\label{eq:deconvMT1}
\M_{ \X \deconv  \Y} (s, \tau, t ) 
& = E \Bigl [ \Bigl (\ \sup_{ u \le \tau} \bigl\{  \X( u,t) /  
\Y (u,\tau) \bigr\} \bigr)^{s-1}\Bigr] \\
& \hspace{-2cm}
=    E \Bigl [\ \sup_{0  \le u \le \tau} 
\bigl\{ \bigl( \X( u,t)\bigr)^{s-1}
\cdot
\bigl(\Y (u,\tau) \bigr)^{1-s} \bigr\} \Bigr]\\
& \hspace{-2cm}
\le \sum_{u=0}^{\tau} 
E \bigl[ \bigl(  \X( u,t)\bigr)^{s-1}\bigr]
\cdot E\bigl[ \bigl(\Y (u,\tau)\bigr)^{1-s}\bigr]\,,
\end{align*}
and Eq.~\eqref{eq:deconvMT} follows 
from the definition of the Mellin transform. 
\end{proof}
As an application of Lemmas~\ref{thm:conc} and~\ref{lem:MTConvDecon},
we compute a bound on the
Mellin transform of the service process 
for a cascade of fading channels.
We also make the idealizing assumption that 
the channels are independent. 
%{\color{orange}
%This means that 
%there is no interference, e.g., neighboring nodes  
%use different transmission channels. 
%}

\begin{corollary} \label{cor:cascade}
Consider a  cascade of $N$ independent,
identically distributed fading channels,
where the service process for
each channel is given by Eq.~\eqref{eq:cumulative_ServiceCapacity},
with i.i.d. random variables $\gamma_i$.
Let $\S_{\rm net}(\tau,t)$
denote the SNR service process for the entire cascade.
Then, the  Mellin transform for this process satisfies
\begin{align*}
\M_{\S_{\rm net}}(s,\tau,t) \le
\binom{N-1+t-\tau}{t-\tau} 
\cdot \bigl(\M_{g(\gamma)}(s) \bigr)^{t-\tau}
\end{align*}
for all $s<1$.
\end{corollary}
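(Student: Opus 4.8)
The plan is to combine the concatenation identity of Lemma~\ref{thm:conc} with the Mellin-transform convolution bound of Lemma~\ref{lem:MTConvDecon} and to induct on the number of nodes $N$. First I would record the single-channel transform. Since $\S(\tau,t)=\prod_{i=\tau}^{t-1}g(\gamma_i)$ has i.i.d.\ factors, the multiplicative property of the Mellin transform in Eq.~\eqref{eq:Property3MT} gives $\M_{\S}(s,\tau,t)=E\bigl[\prod_{i=\tau}^{t-1}g(\gamma_i)^{s-1}\bigr]=\bigl(\M_{g(\gamma)}(s)\bigr)^{t-\tau}$ for each channel. This is exactly the base case $N=1$, since $\binom{t-\tau}{t-\tau}=1$.

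For the inductive step I would write $\S_{\rm net}=\bigl(\S_1\conv\cdots\conv\S_{N-1}\bigr)\conv\S_N$, which is legitimate because $\conv$ is associative (one of the dioid properties established above). Putting $\X=\S_1\conv\cdots\conv\S_{N-1}$ and $\Y=\S_N$, I would first verify the independence hypothesis of Lemma~\ref{lem:MTConvDecon}: $\X$ is a function of the channel variables of the first $N-1$ links only, whereas $\Y$ depends solely on link $N$, so mutual independence of the channels transfers to independence of $\X$ and $\Y$. Since $s<1$, the convolution bound Eq.~\eqref{eq:convMT} applies, and together with the induction hypothesis $\M_{\X}(s,\tau,u)\le\binom{N-2+u-\tau}{u-\tau}\bigl(\M_{g(\gamma)}(s)\bigr)^{u-\tau}$ and the single-channel value $\M_{\Y}(s,u,t)=\bigl(\M_{g(\gamma)}(s)\bigr)^{t-u}$ it yields
\begin{equation*}
\M_{\S_{\rm net}}(s,\tau,t)\le\bigl(\M_{g(\gamma)}(s)\bigr)^{t-\tau}\sum_{u=\tau}^{t}\binom{N-2+u-\tau}{u-\tau}\,,
\end{equation*}
where the two powers of $\M_{g(\gamma)}(s)$ merge into the single prefactor $\bigl(\M_{g(\gamma)}(s)\bigr)^{t-\tau}$ for every summation index $u$.

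The one nonroutine step is the evaluation of the binomial sum. Re-indexing by $j=u-\tau$ turns it into $\sum_{j=0}^{t-\tau}\binom{N-2+j}{j}$, and the hockey-stick identity $\sum_{j=0}^{m}\binom{r+j}{j}=\binom{r+m+1}{m}$ (with $r=N-2$ and $m=t-\tau$) collapses it to exactly $\binom{N-1+t-\tau}{t-\tau}$, which closes the induction. I expect the principal difficulty to be bookkeeping rather than conceptual: one must keep the upper index of the binomial coefficient correctly tied to $N$ through each application of Eq.~\eqref{eq:convMT}, and must confirm at each stage that the partial concatenation remains independent of the next channel so that Lemma~\ref{lem:MTConvDecon} is applicable. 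A point worth stating explicitly is that the assumption $s<1$ is what selects the convolution (rather than deconvolution) branch of Lemma~\ref{lem:MTConvDecon} and fixes the direction of the Mellin inequality throughout.
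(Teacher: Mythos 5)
Your proof is correct and in substance identical to the paper's: both rest on the concatenation result (Lemma~\ref{thm:conc}), the convolution bound Eq.~\eqref{eq:convMT} for $s<1$ with independence across channels, and the single-channel evaluation $\M_{\S}(s,\tau,t)=\bigl(\M_{g(\gamma)}(s)\bigr)^{t-\tau}$, after which the powers telescope. The only difference is bookkeeping: the paper applies the bound to the $N$-fold convolution in one shot, observes that every product over a nondecreasing sequence $\tau=u_0\le u_1\le\cdots\le u_N=t$ equals $\bigl(\M_{g(\gamma)}(s)\bigr)^{t-\tau}$, and counts the $\binom{N-1+t-\tau}{t-\tau}$ such sequences directly, whereas you recover the same coefficient by induction on $N$ via the hockey-stick identity --- an equivalent computation, since that identity is exactly the recursion underlying the paper's count.
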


\begin{proof} We use the server concatenation formula
in Eq.~\eqref{eq:concatenationN} to 
represent the service of the cascade as
$\S_{\rm net}(\tau,t) =\S_1\conv \S_2\conv \dots\conv \S_N(\tau,t)$.
By Lemma~\ref{lem:MTConvDecon}, its Mellin transform satisfies
for $s<1$
\begin{align}\label{eq:Mbound}
\M_{\S_{\rm net}}(s,\tau,t) 
&\le \sum_{u_1\dots u_{N-1}} \prod_{n=1}^N M_\S(s,u_{n-1},u_n)\,,
\end{align}
where the  sum runs over all sequences $u_0\le u_1\le\dots \le u_N$
with  $u_0=\tau$ and $u_N=t$.
The assumptions on the service processes of the individual
channels imply that each product evaluates to 
the same function
\begin{align*}
\prod_{n=1}^N M_\S(s,u_{n-1},u_n) 
&= \prod_{n=1}^N \bigl(\M_{g(\gamma)}(s) \bigr)^{u_n-u_{n-1}}\\
&= \bigl(\M_{g(\gamma)}(s) \bigr)^{t-\tau}\,,
\end{align*}
where $\gamma$ is a random variable that has the same
distribution as the $\gamma_i$.
Since the number of summands in Eq.~\eqref{eq:Mbound}
is given by $\binom{N-1+t-\tau}{t-\tau}$, the claim follows.
\end{proof}

\subsection{Performance Bounds for the Bit Domain} \label{sec:bounds}

We next obtain  
network-level performance bounds for the bit domain.
This involves a transformation from the
SNR domain to the bit domain via the relationship in 
Fig.~\ref{fig:domains}, which provides the translation of 
the abstract metrics $\D$ and $\B$ into 
processes $D$ and $B$, which, along with $W$,
are concrete  measures for traffic burstiness, buffer requirements,
and delay.

\begin{theorem}\label{thm:BLbound}
Given a system where arrivals are described by a bivariate 
process $A(\tau,t)$, and the available service 
is given by a dynamic server $S(\tau,t)$. Let
$\A(\tau,t)$ and $\S(\tau,t)$ be
the corresponding SNR processes. Fix $\eps >0$ and define,
for $s>0$,
\begin{align*}
\Msum (s, \tau, t) = \sum_{u=0}^{\min(\tau,t)} \M_{\A} (1+s, u, t) 
 		  \cdot  \M_{\S} (1-s, u, \tau) \,.
\end{align*} 
Then, we have the following probabilistic performance bounds.

\noindent $\bullet$  {\sc Output Burstiness:} 
%A probabilistic bound for the output burstiness
$\P \bigl( D(\tau,t) > d^\eps  \bigr) \leq \eps$,
%is given by 
where
\begin{align*}
d^\eps(\tau,t) = \inf_{s>0} 
   \Big\{ \frac{1}{s} \bigl( \log \Msum (s,\tau, t) 
-\log \eps \bigr) \Big \} \,;
\end{align*} 
\noindent $\bullet$  {\sc Backlog:} 
%A probabilistic bound for the backlog  
$\P \bigl( B(t) > b^\eps  \bigr) \leq \eps$, where
%is given by
\begin{align*}
b^\eps = \inf_{s>0} 
   \Big\{ \frac{1}{s} \bigl( \log \Msum (s,t, t) 
-\log \eps \bigr) \Big \} \,;
\end{align*} 
\noindent $\bullet$  {\sc Delay:} 
%A probabilistic bound for the delay 
$\P \bigl( W(t) > w^\eps  \bigr) \leq \eps$, where
$w^\eps$ is the smallest number satisfying 
\begin{align*}
%\inf_{s>0} 
%   \Big\{ \frac{1}{s} \bigl( \log \Msum 
%(s, t+w^\eps,t)   -\log \eps \bigr) \Big \} \leq 0 \,.
\inf_{s>0} \Big\{ \Msum (s, t+w^\eps,t) \Big \} \leq \eps \,.
\end{align*} 
\end{theorem}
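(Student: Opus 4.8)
The plan is to prove all three bounds by the same three-step recipe: first translate each bit-domain tail event into an event about the \mx~deconvolution $\A \deconv \S$ in the SNR domain, then apply the moment (Chernoff-type) bound together with the Mellin-transform estimate of Lemma~\ref{lem:MTConvDecon}, and finally optimize over the free parameter $s$. The bridge between the two domains is the correspondence $D = \log \D$, $B = \log \B$, and $W = \W$ (the latter from Eq.~\eqref{eq:def-delay}), combined with the sample-path bounds of Lemma~\ref{lem:OBbound}, which control $\D$, $\B$, and $\W$ through the single quantity $\A \deconv \S$. Because those bounds hold for every sample path, passing to probabilities is immediate.

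For output burstiness I would fix $0 \le \tau \le t$ and observe that $\{D(\tau,t) > d\} = \{\D(\tau,t) > e^{d}\} \subseteq \{\A \deconv \S(\tau,t) \ge e^{d}\}$ by Lemma~\ref{lem:OBbound}. Applying the moment bound~\eqref{eq:moment1} with $a = e^{d}$ and any $s > 0$ gives $\P(D(\tau,t) > d) \le e^{-ds}\, \M_{\A \deconv \S}(1+s, \tau, t)$, and since $1+s > 1$, Lemma~\ref{lem:MTConvDecon} together with $\min(\tau,t) = \tau$ bounds this Mellin transform by exactly $\Msum(s, \tau, t)$. Imposing $e^{-ds}\Msum(s,\tau,t) \le \eps$ and solving for $d$ yields $d \ge \frac{1}{s}\bigl(\log \Msum(s,\tau,t) - \log \eps\bigr)$; taking the infimum over $s > 0$ produces $d^\eps$. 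The backlog bound is the special case $\tau = t$, using $\B(t) \le \A \deconv \S(t,t)$ and $\min(t,t) = t$, so it follows with no extra work.

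The delay bound is the step that needs care. Here I would start from the sample-path delay bound of Lemma~\ref{lem:OBbound}: whenever $\A \deconv \S(t+w, t) \le 1$ we have $\W(t) \le w$, hence $\{W(t) > w\} = \{\W(t) > w\} \subseteq \{\A \deconv \S(t+w, t) > 1\}$. The moment bound~\eqref{eq:moment1} at $a = 1$ then gives $\P(W(t) > w) \le \M_{\A \deconv \S}(1+s, t+w, t)$ for every $s > 0$. The subtlety is that a direct use of Lemma~\ref{lem:MTConvDecon} produces a sum running up to $u = t+w$, whereas $\Msum(s, t+w, t)$ truncates at $u = \min(t+w, t) = t$. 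To reconcile them I would note that for $u > t$ the arrival increment satisfies $\A(u,t) = 1$ while $\S(u, t+w) \ge \S(u,u) = 1$, so $\A(u,t)/\S(u,t+w) \le 1$; those indices can never make the supremum defining $\A \deconv \S(t+w,t)$ exceed $1$. Union-bounding the event $\{\A \deconv \S(t+w,t) > 1\}$ over only $0 \le u \le t$ and applying the quotient rule~\eqref{eq:Property4MT} then recovers exactly $\inf_{s>0} \Msum(s, t+w, t)$, and choosing $w^\eps$ as the smallest $w$ for which this infimum is $\le \eps$ closes the argument.

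I expect this truncation of the summation index in the delay bound to be the main obstacle, since it is the only place where the clean correspondence between Lemma~\ref{lem:MTConvDecon} and the definition of $\Msum$ breaks down and requires the extra monotonicity and causality observation about increments with $u > t$. The remaining computations are routine once the sample-path bounds of Lemma~\ref{lem:OBbound} and the Mellin quotient rule are in hand.
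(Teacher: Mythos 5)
Your proposal is correct and follows essentially the same route as the paper's own proof: sample-path bounds from Lemma~\ref{lem:OBbound}, the moment bound~\eqref{eq:moment1}, the Mellin estimate of Lemma~\ref{lem:MTConvDecon}, and optimization over $s>0$, with backlog as the $\tau=t$ special case. In particular, your resolution of the delay-bound subtlety---observing that for $t<u\le t+w$ one has $\A(u,t)=1\le\S(u,t+w)$, so the supremum may be restricted to $u\le t$ before repeating the Mellin argument---is precisely the sharpening the paper carries out via Eq.~\eqref{eq:delayBound} and the auxiliary process $\Z(t)$.
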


%The proof of the Theorem~combines the 
%\mx~calculus {\color{violet} from Section \ref{sec:calculus}} 
%with an application of the
%moment bound. A detailed proof can be found in \cite{ZuLiBu12a}.}

%------{\bf For the tech report:}------------------------------------------
%\begin{comment}
\begin{proof}
For the bound on the distribution of the
output burstiness, we start
from the inequality $\D(\tau,t)\le \A\conv\S(\tau,t)$.
It follows from the moment bound and Lemma~\ref{lem:MTConvDecon}
that, for any choice of $d>0$ and all $s>0$
\begin{align*}
\P(D(\tau,t)>d)&= \P(\D(t)> e^d)\\
 &\le \P(\A\deconv \S(\tau ,t)>e^d)\\
&\le (e^d)^{-s} \M_{\A\deconv\S}(1+s,\tau,t)\\
&= e^{-sd} \Msum (s,\tau,t)\,.
\end{align*}
To obtain the claim, we
set the right hand side equal to $\eps$, solve
for $d$, and optimize over the value of $s>0$
to obtain $d^\eps(\tau,t)$.
The proof of the backlog bound proceeds in the same way,
starting from the
inequality $\B(t) \le \A(0,t)/\D(0,t)$.

The delay bound is slightly more subtle.
Fix $t\ge 0$.
Using Lemma \ref{lem:OBbound}
and the moment bound with $a=1$, we obtain that
\begin{align*}
\P(\W(t)>w) &\le \P(\A\deconv\S(t+w,t) > 1)\\
&\le \M_{A\deconv\S}(s+1,t+w,t)\,
\end{align*}
for every $s>0$. By Lemma~\ref{lem:MTConvDecon},
the Mellin transform $\M_{A\deconv\S}(s+1,t+w,t)$
satisfies a bound that agrees with
the function $\Msum(s,t+w,t)$, except that
the upper limit in the summation that defines $\Msum(s,t+w,t)$ 
would have to be replaced by $\tau=t+w$.
To obtain a sharper estimate,
we use instead Eq.~\eqref{eq:delayBound}
from the proof of Lemma \ref{lem:OBbound}.
The resulting bound is that
$$
\Z(t)\deq \sup_{0\le u\le t}\Bigl\{\frac{\A(u,t)}{\S(u,t+w)}
\Bigr\}
$$
satisfies
\begin{align}
\notag
\P(\W(t)>w) &\le \P(\Z(t) > 1)\\
\notag &\le \M_{\Z(t)}(s+1)\\
\label{eq:Wbound}&\le \Msum(s,t+w,t)\,,
\end{align}
where we have used that the supremum in 
the definition of $\Z$ extends only
up to $u=t$, and then repeated
the proof Eq.~\eqref{eq:deconvMT}.
The claim follows by optimizing over $s$.
\end{proof}

%\end{comment}
%-------{\bf end of techreport material}------------------------------

Corresponding bounds as in Theorem~\ref{thm:BLbound} can be obtained 
using the \minplus algebra and the 
network calculus with moment-generating functions \cite{FidlerMGF}. 
The significance of Theorem~\ref{thm:BLbound} 
is that it permits the application of the the network calculus, 
where 
traffic is characterized in the bit domain, and service 
is naturally expressed in the SNR domain. 
This will become evident in the next section, 
where  the Theorem~gives us concise  
bounds for delays and backlog of multi-hop networks
with Rayleigh fading channels.

\section{Network Performance of Rayleigh Channels} \label{sec:rayleigh}

We now apply the techniques developed in the 
two previous sections to a network of Rayleigh channels.
Consider  the  dynamic SNR server description
for a Rayleigh fading channel, as constructed in Sec.~\ref{sec:model}.B.
We use Eq.~\eqref{eq:cumulative_ServiceCapacity},
with the function $g(\gamma)$ given by
\begin{equation} \label{eq:Ray1}
 g(\gamma) = 1+ \gamma = 1+ \bar{\gamma} |h|^2, 
\end{equation}
where $\bar{\gamma}$ is the average SNR of the 
channel and  $|h|$ is the fading gain. For Rayleigh fading, 
$|h|$ is a Rayleigh random variable with probability density
$f(x)= 2x e^{-x^2}$.
In a physical system, 
$\bar{\gamma}= \bar P_r / \sigma^2$, where $\bar P_r$ and $\sigma^2$ 
are the received signal power and the (additive white Gaussian) noise power at the receiver, 
respectively.  Then, $|h|^2$ is exponentially distributed, 
and the Mellin transform of $g(\gamma)$ is given by
\begin{align*}\label{eq:RayMT}
\M_{g(\gamma)}(s) = 
e^{ 1/\bar\gamma}   \bar{\gamma}^{s-1} \Gamma(s,\bar{\gamma}^{-1}) \, ,
\end{align*}
where $\Gamma(s, y)=\int_y^\infty x^{s-1}e^{-x}\,dx$ 
is the upper incomplete Gamma function.
Using the assumption that the $\gamma_i$ are independent and identically
distributed, we obtain for the Mellin transform 
of the dynamic server
\begin{equation}\label{eq:RayMT1}
\M_{\S}(s, \tau, t)=\Bigl(e^{ 1/{  \bar{\gamma}}}  
\bar{\gamma}^{s-1} \Gamma(s,\bar\gamma^{-1}) \Bigr)^{t-\tau}.
\end{equation}

For the arrival process, we use a  characterization due
to Chang \cite{CSChang}, where 
the moment-generating function of the cumulative arrival 
process in the bit domain is bounded by 
\begin{align*} %\label{eq:srRegTraffic}
\frac{1}{s} \log E [e^{s A(\tau,t) }] \le  \rho(s)\cdot(t-\tau) +\sigma(s)
\end{align*}
for some $s>0$. In general, $\rho(s)$ and $\sigma(s)$
are nonnegative increasing functions of $s$
that may become infinite when $s$ is large.  
This traffic class,  referred to 
as \sr~bounded arrivals, is broad enough 
to include Markov-modulated arrival processes.  
The corresponding class of SNR arrival processes is defined
by the condition that
\begin{align}\label{eq:srMA}
\M_{\A}(s,\tau,t)    \le    e^{(s-1) \cdot (\rho(s-1)\cdot(t-\tau) 
+ \sigma(s-1))}\ 
\end{align}
for some $s>1$.

\subsection{Performance Bounds of Rayleigh Fading Channels} 
We consider the transmission of \sr~bounded arrivals  on 
a Rayleigh fading channel. To obtain single-hop
performance bounds, we apply
Theorem~\ref{thm:BLbound} with the expressions for the
 Mellin transforms for the SNR service and arrival
processes from Eqs.~\eqref{eq:RayMT1}
and \eqref{eq:srMA}. For the function $\Msum (s,\tau,t)$
from the statement of the theorem, we compute 
\begin{align*}
%\hspace{-5mm} 
\Msum (s, \tau, t) & 
\leq \ e^{s(\rho(s)(t-\tau)+\sigma(s))} 
%\\ & \quad \times 
\sum_{u = [\tau-t]_+}^{ \infty}  
\big (  \underbrace{e^{s  \rho(s)}   e^{1/\bar\gamma} 
\bar{\gamma}^{-s} \Gamma(1-s,\bar\gamma^{-1})}_{\deq V(s)} \big )^{ u} \,,
\end{align*}
where $[\tau-t]_+$ is the maximum of $\tau-t$ and $0$.
The sum converges when $V(s)<1$, which can be interpreted as a
stability condition.  
%Inserting the result
%into the backlog bound of Theorem~\ref{thm:BLbound},
%we obtain
% \begin{align*}%\label{eq:RayBacklog1}
% b^\eps    =  
%  \inf_{s> 0} \left\{ \sigma(s)  -\frac{1}{s}
%\bigl (\log (1-V(s)) + \log \eps \bigr) \right\} \! .  
% \end{align*}   
%The expressions for output burstiness and delay are obtained 
%in a similar fashion (see \cite{ZuLiBu12a}). 
%%%%%%%%%%%%%%%%%%%%%%%%
%%%%%%%%%%%%%%%%%%%%%%%%
%%%%%%%%%%%%%%%%%%%%%%%%
Inserting the result into Theorem~\ref{thm:BLbound}, we obtain
for the output burstiness the probabilistic bound
\begin{align*}
d^\eps_{\rm net}(\tau,t)  & =  
   \inf_{s> 0} \Bigl\{ \rho(s)(t-\tau)+
\sigma(s) 
%\Bigr.\\&\qquad \Bigl.  
-\frac{1}{s} \bigl(\log (1\!-\!V(s))  
+   \log \eps \bigr)    \Bigr\} \,.  
\end{align*}
The backlog bound is obtained by setting $\tau=t$,
 \begin{align*}%\label{eq:RayBacklog1}
 b^\eps    =  
  \inf_{s> 0} \left\{ \sigma(s)  -\frac{1}{s}
\bigl (\log (1-V(s)) + \log \eps \bigr) \right\}\,.  
 \end{align*}   
The delay bound is the smallest number $w^\eps$ such that
\begin{align*}
\inf_{s>0}\Bigl\{
\frac{e^{s\cdot(-\rho(s)w^\eps+\sigma(s))} }{1-V(s)} 
\cdot (V(s))^{w^\eps}\Bigr\}\le \eps\,.
\end{align*}
%%%%%%%%%%%%%%%%%%%%%%%%
%%%%%%%%%%%%%%%%%%%%%%%%
%%%%%%%%%%%%%%%%%%%%%%%%
%%%%%%%%%%%%%%%%%%%%%%%%

We also derive end-to-end bounds for 
a cascade of $N$ Rayleigh channels
with \sr~bounded arrivals, using the same parameters as before. 
Let $\S_{\rm net}(\tau,t)$ be the service
process for the entire cascade.
By Corollary~\ref{cor:cascade}, its Mellin 
transform satisfies for $0\le \tau\le t$
\begin{align*}
\M_{\S_{\rm net}}(s,\tau,t) 
\le 
\binom{N-1+t-\tau}{t-\tau} 
\cdot \Bigl(e^{ 1/{  \bar{\gamma}}}  
\bar{\gamma}^{s-1} \Gamma(s,\bar\gamma^{-1}) \Bigr)^{t-\tau}\!.
\end{align*}
We will use again Theorem~\ref{thm:BLbound}.
For $0\le \tau\le t$, we compute 
\begin{align*}
\Msum_{\rm net} (s, \tau, t) 
%&\leq \ e^{s\cdot(\rho(s)\cdot(t-\tau)+\sigma(s))}
%\sum_{u = 0}^\infty  \binom{N-1+u}{u} (V(s))^u \\
&\leq \ \frac{e^{s\cdot(\rho(s)(t-\tau)+\sigma(s))}}{(1-V(s) )^N}
\,,
\end{align*}
where $V(s)$ is as defined above, and where we applied the 
combinatorial identity
\begin{align}\label{eq:MultiNodePerf}
\sum_{u=0}^{\infty}  x^u \binom{N-1+u}{u}  = \frac{1}{(1-x)^N} \ ,  
\end{align}
for any $N > 1$ and for $0< x <1$. 
Inserting $\Msum_{\rm net} (s, \tau, t)$ into  
Theorem~\ref{thm:BLbound}
gives for the end-to-end output bound, 
denoted by $d^\eps_{\rm net}(\tau,t)$,  the value
\begin{align*}
d^\eps_{\rm net}(\tau,t)  & =  
   \inf_{s> 0} \Bigl\{ \rho(s)(t-\tau)+
\sigma(s) 
%\Bigr.\\ &\qquad \Bigl.  
-\frac{1}{s} \bigl( N 
\log (1\!-\!V(s))  +   \log \eps \bigr)    \Bigr\} \,.  
\end{align*} 
Note that for $N=1$, this bound agrees
with the previous bound for a single node.
In the same way, we derive the probabilistic backlog bound
\begin{align}\label{eq:MultiNodeBacklog}
b^\eps_{\rm net}  & =  
   \inf_{s> 0} \left\{ \sigma(s)  -\frac{1}{s} \bigl( N 
\log (1-V(s))  +   \log \eps \bigr)    \right\} \,.  
\end{align} 
For the delay bound, we estimate for $w\ge 0$
\begin{align}
\notag
\Msum_{\rm net} (s, r+w, t) %\hspace{-2cm}&\\ \notag
&\leq \ e^{-s(\rho(s) w +\sigma(s))}
\sum_{u=w}^\infty \binom{N-1+u}{u}(V(s))^u\\
\label{eq:delay-M}
&\leq \ 
\frac{e^{s(-\rho(s) w +\sigma(s))}}{(1-V(s) )^N} 
\cdot\min \left\{1,(V(s))^w w^{N-1} \right\} \,.
\end{align}
Here, the first term in the minimum is obtained
by extending the summation
down to $u=0$, and the second term results
from the inequality
\begin{align*}
\binom{N-1+u}{u}\le w^{N-1}\cdot \binom{N-1+u-w}{u-w}
\end{align*}
for $u\ge w$. In both cases, the resulting sum
can be evaluated with Eq.~\eqref{eq:MultiNodePerf}.
The delay bound $w^\eps$ is determined according
to Theorem~\ref{thm:BLbound} by setting
the right hand side of Eq.~\eqref{eq:delay-M}
equal to $\eps$, solving for $w$, and minimizing over $s$.
Because of the complexity of the bound
in Eq.~\eqref{eq:delay-M}, the last two steps
can only be performed numerically.

It is apparent that the complexity of computing end-to-end 
bounds is no different than bounds for a single channel. 
More importantly, observe that the end-to-end  bounds scale 
linearly in  the number of nodes $N$. Relaxing 
the strong independence assumptions on the channel
properties would give different scaling properties.

\begin{figure}
\centering
\includegraphics [width=5in]{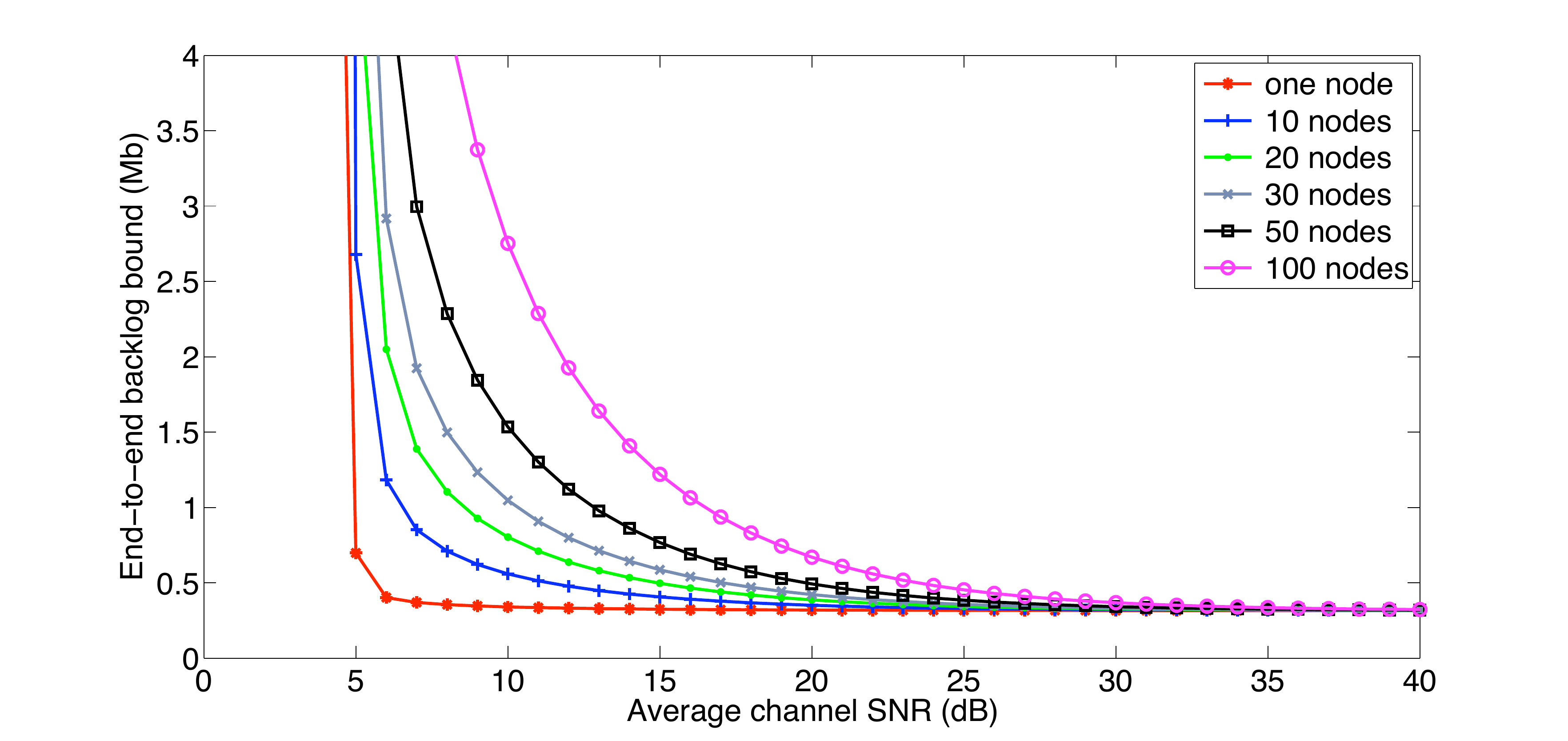}
%\vspace{-2mm}
\caption{End-to-end backlog bound ($b^\eps_{\rm net}$) vs. average channel SNR ($\bar{\gamma} $) for multihop Rayleigh fading channels with  $\eps =10^{-4}$,  \sr~bounded traffic with $\sigma(s) = 50$~kb and $\rho(s) = 30$~kbps, and $W=20$~kHz.}
\label{fig:backlog-vs-SNR-multihop}

\vspace{-4mm}
\end{figure}
\begin{figure}
\centering
\includegraphics [width=5in]{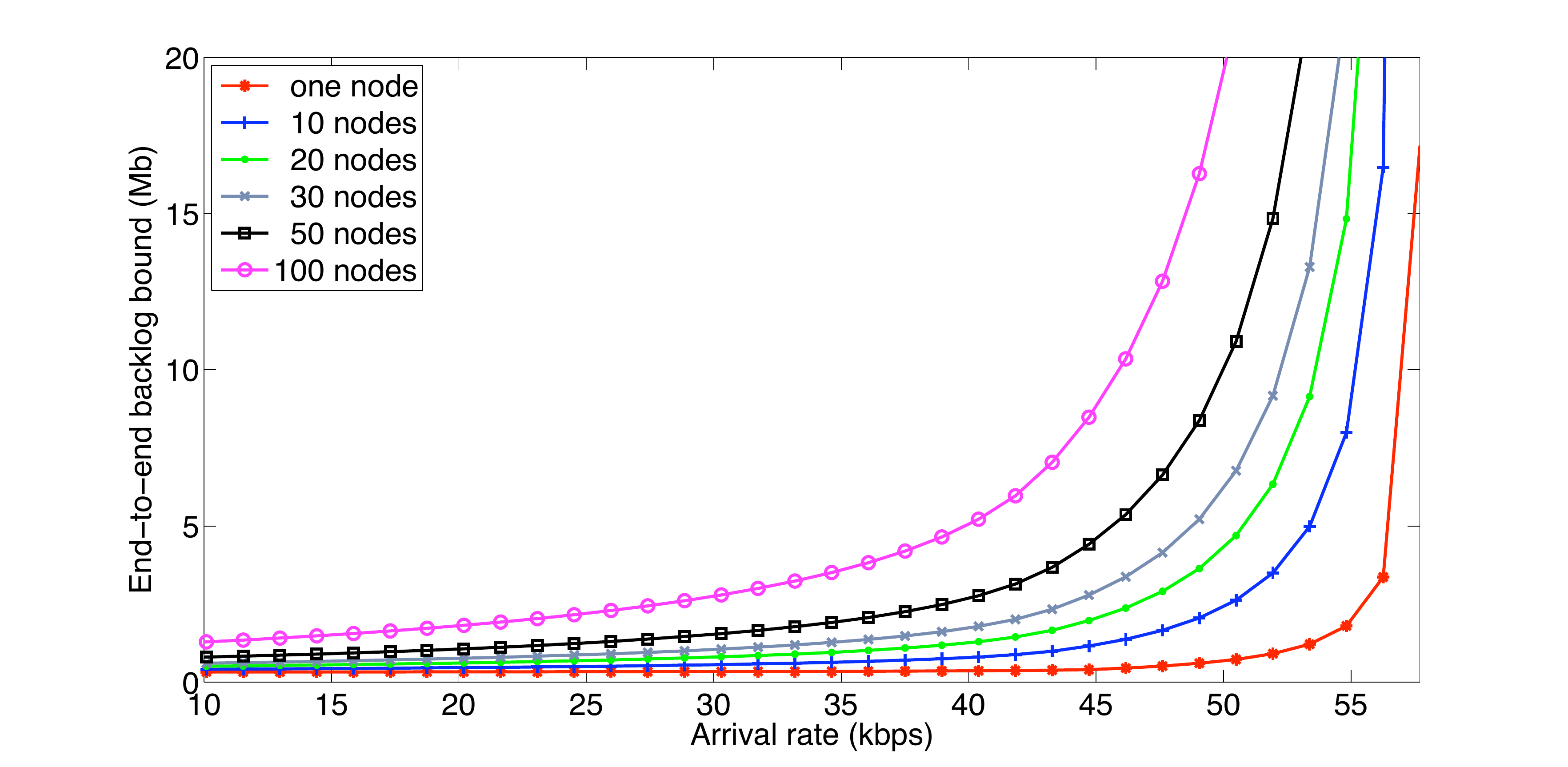}
%\vspace{-2mm}
\caption{End-to-end backlog bound  ($b_{net}^{\eps}$) vs. arrival rate ($\rho(s)$) for   multi-hop   Rayleigh fading channels with   $\eps =10^{-4}$, \sr ~bounded traffic with $\sigma(s) = 50$~kb and $\bar{\gamma} = 10$~dB, and $W=20$~kHz.}
\label{fig:backlog-vs-rho-multihop}
%\vspace{-4mm}

\centering
\includegraphics [width=5in]{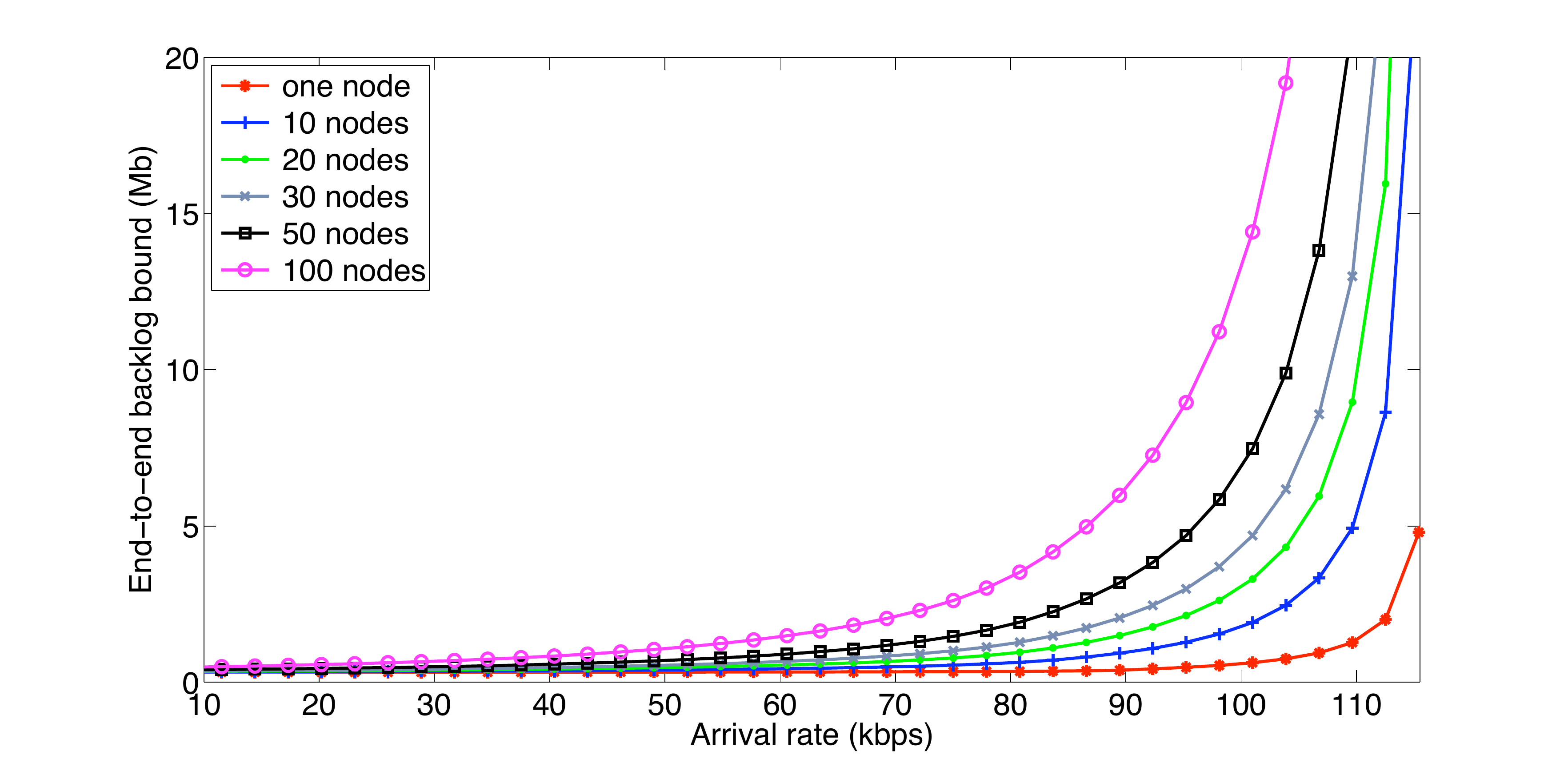}
%\vspace{-2mm}
\caption{End-to-end backlog bound ($b^\eps_{\rm net}$) vs. arrival rate ($\rho(s)$) for multi-hop  Rayleigh fading channels with $\eps =10^{-4}$, \sr~bounded traffic with $\sigma(s) = 50$~kb, $\bar{\gamma} = 20$~dB, and $W=20$~kHz.}
\label{fig:backlog-vs-rho2-multihop}

\vspace{-4mm}
\end{figure}

\subsection{Numerical Examples}
\label{sec:numerical}
We next present numerical results, where we assume Rayleigh channels with a transmission bandwidth of $W =20$~kHz. 
For traffic, we use \sr~bounded arrivals with default values $\sigma(s) \equiv 50$~kb  and $\rho(s) \equiv 30$~kbps, i.e., the bounds on rate and bursts are deterministic. Hence, the source of randomness in 
the examples results only from randomness of the channels.
We use a violation probability of $\eps = 10^{-4}$.

In Fig.~\ref{fig:backlog-vs-SNR-multihop} we show the end-to-end backlog for cascades  
of $N$~Rayleigh channels, as a function of the average SNR of each channel. 
Even though the backlog bounds increase only linearly in the number of nodes, 
the per-node requirements -- at least for the last node of the cascade -- must 
satisfy the end-to-end bounds, since it cannot be assumed that  backlog 
is equally distributed across the nodes. 
When the SNR of the nodes is sufficiently high, 
the backlog remains low even for a large number of hops. We observe 
that the channel becomes saturated for $\bar{\gamma}=5$~dB. When 
the number of channels is small, the backlog increases sharply in the vicinity of  $\bar{\gamma}=5$, but remains low everywhere else.

In Figs.~\ref{fig:backlog-vs-rho-multihop} and~\ref{fig:backlog-vs-rho2-multihop} we present, for  values of the 
average SNR of  $\bar{\gamma}=10$~dB (Fig.~\ref{fig:backlog-vs-rho-multihop}) and $\bar{\gamma}=20$~dB (Fig.~\ref{fig:backlog-vs-rho2-multihop}), how the end-to-end backlog 
increases as a function of the transmission rate, for 
different network sizes. We observe that the maximum achievable rate  
that does not result in a `blow-up' of the backlog decreases as the 
number of nodes is increased.  We observe that the blow-up occurs 
earlier when the channel has a lower SNR.

%%---begin----LP-vs-Gamma------------------------------------------------------
\begin{figure}
\centering
\includegraphics [width=5in]{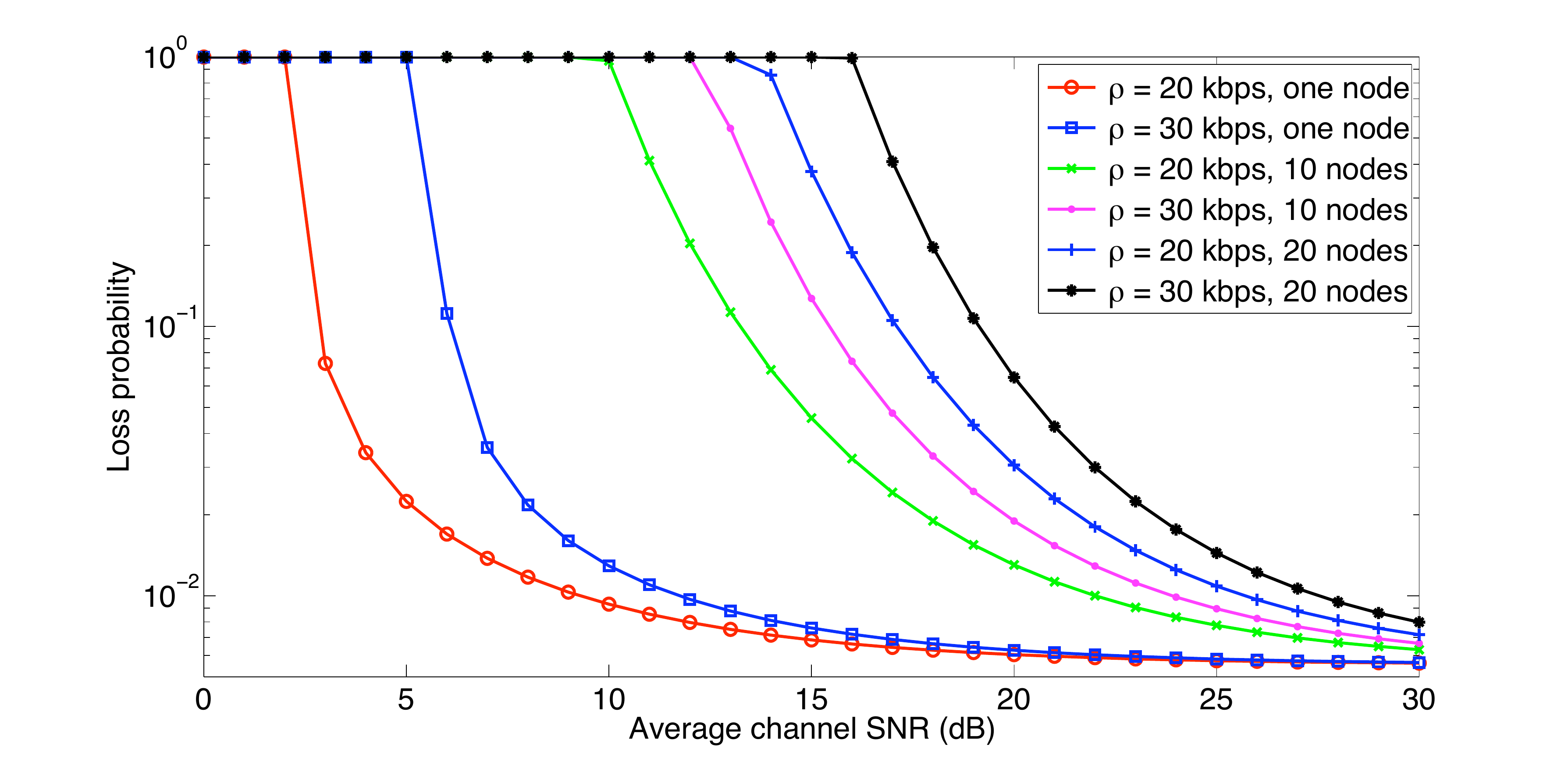}
\vspace{-2mm}
\caption{Loss probability ($\eps(b)$) vs. average channel  SNR ($\bar{\gamma}$) for  multi-hop  Rayleigh fading channels  for  $N=1,10$ and $20$, with buffer size $200$~kb, \sr~bounded traffic with $\sigma(s) = 50$~kb and  $\rho(s) = 20$ or $30$~kbps, and $W=20$~kHz.}
\label{fig:loss-vs-snr-multihop}

\centering
\includegraphics [width=5in]{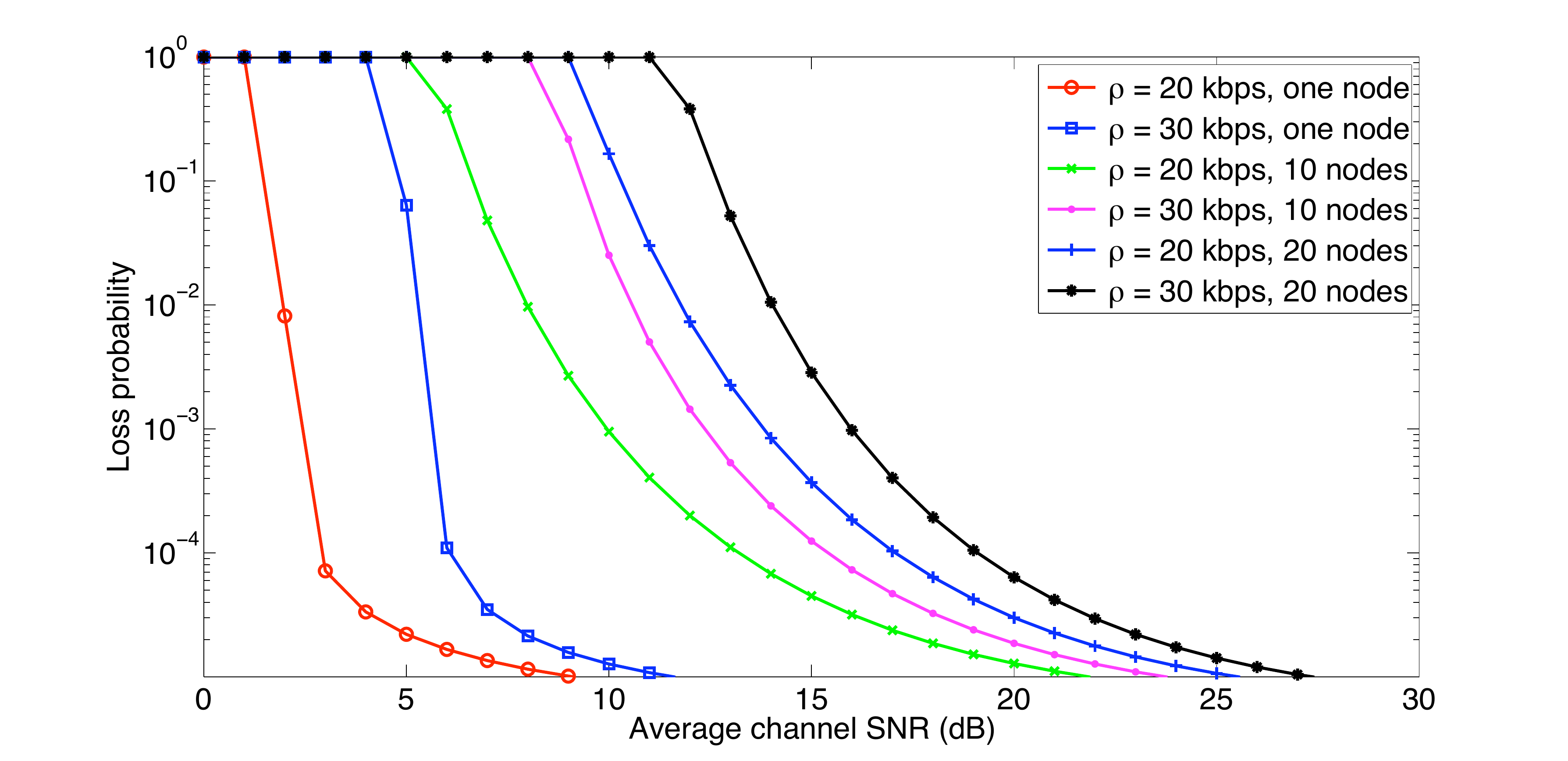}
%\vspace{-2mm}
\caption{Loss probability ($\eps(b)$) vs. average channel SNR ($\bar{\gamma} $) for   multi-hop   Rayleigh fading channels  for $N=1,10$ and $20$ with buffer size $ 400$~kb,   \sr~bounded traffic  with  $\sigma(s) = 50$~kb and  $\rho(s) = 20$ or $30$ kbps, and $W=20$~kHz.}
\label{fig:loss-vs-snr2-multihop}
%\vspace{-4mm}
\end{figure}

\begin{figure}
\centering
\includegraphics [width=5in]{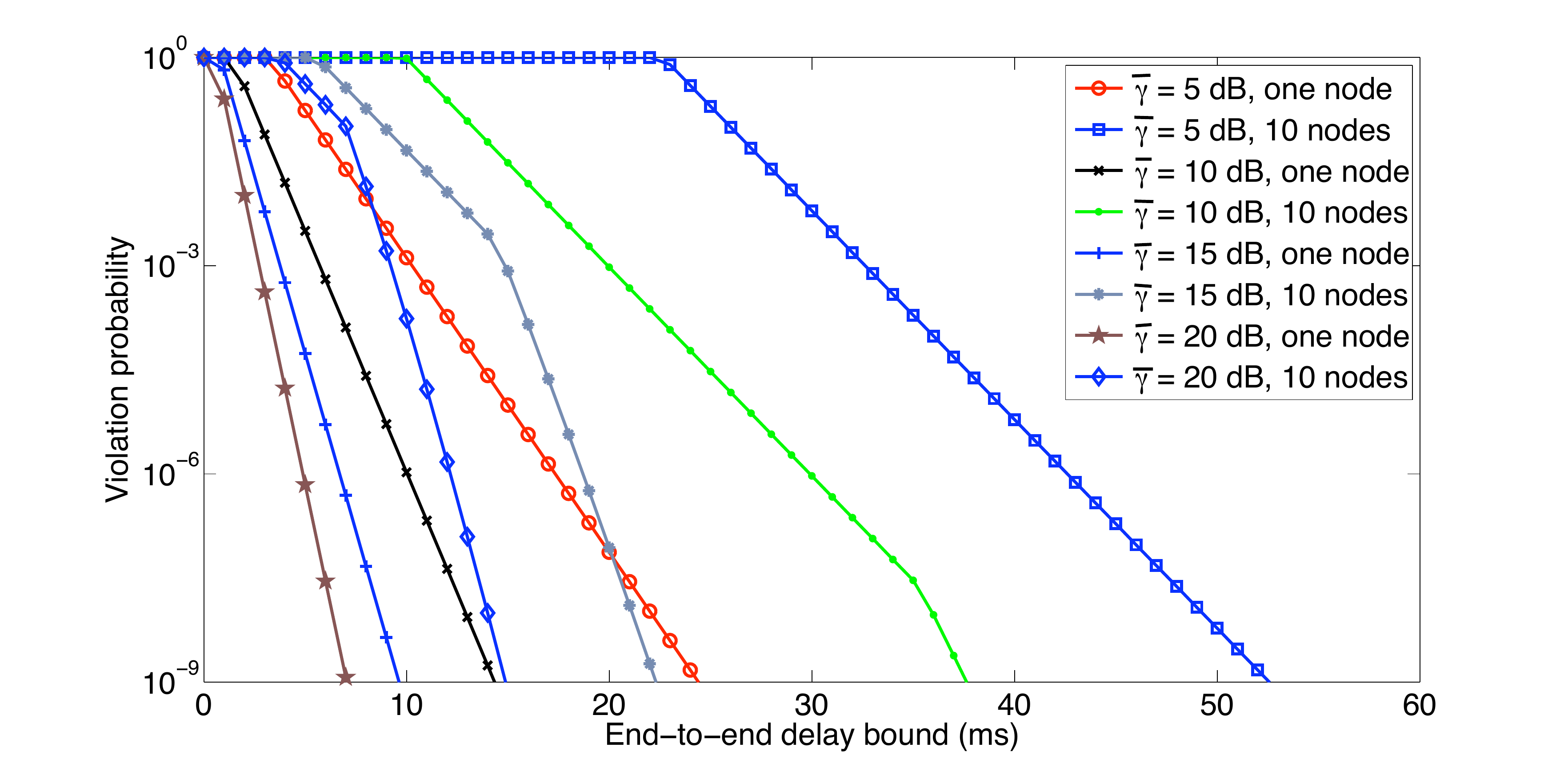}

\vspace{-2mm}
\caption{Delay bound violation probability ($\eps(w)$) vs. end-to-end delays for  multi-hop   Rayleigh fading channels   for   $N=1,10$,    $\bar{\gamma} = 5, 10, 15,20$~dB,    \sr ~bounded traffic  with $\sigma(s) = 50$~kb and $\rho(s) = 20$~kbps   and  $W=20$~kHz.}
\label{fig:viol-vs-delay-multihop}

\centering
\includegraphics [width=5in]{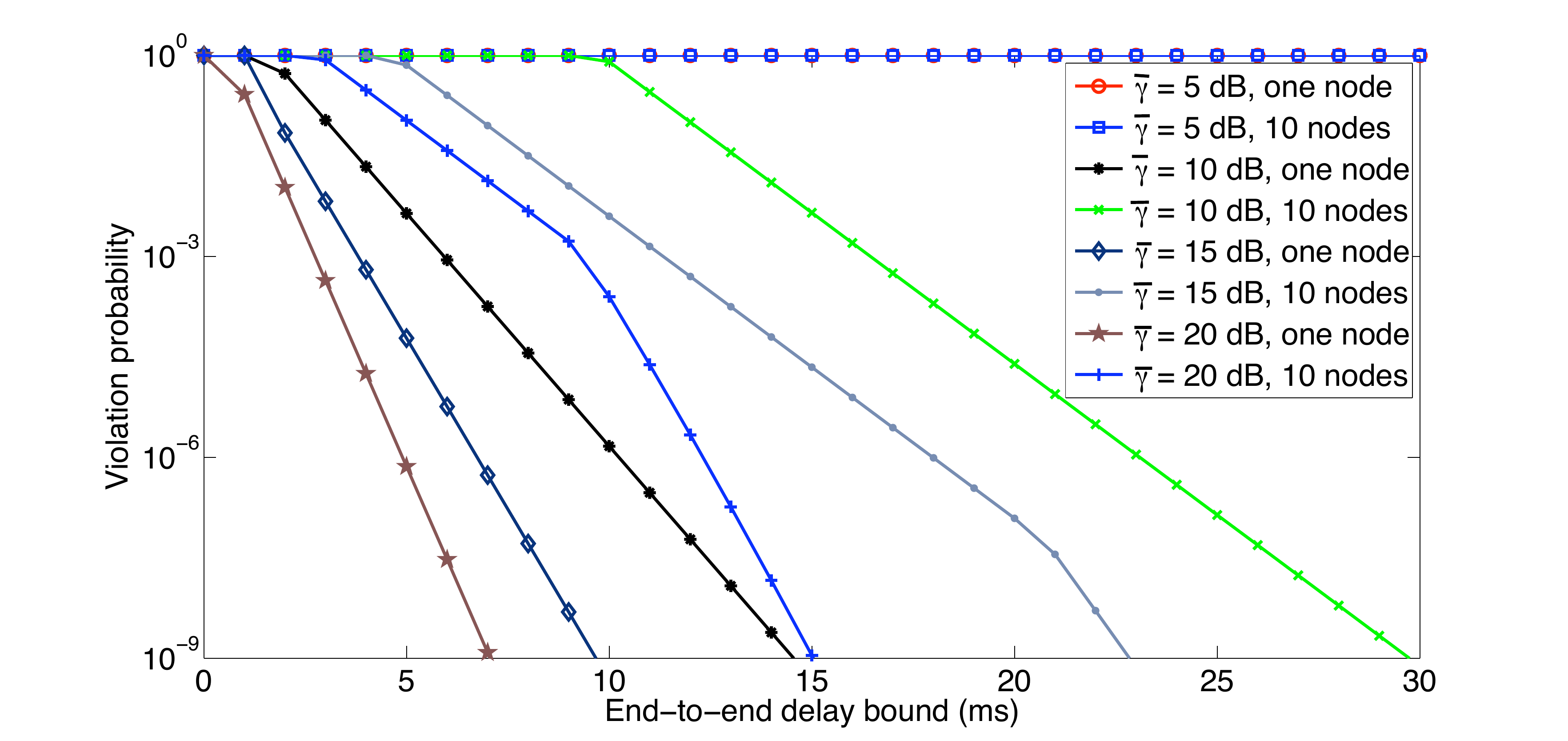}
%\vspace{-2mm}
\caption{Delay bound violation probability ($\eps(w)$) vs. end-to-end delay bound  for  multi-hop   Rayleigh fading channels   for   $N=1,10$,    $\bar{\gamma} = 5, 10, 15,20$~dB,    \sr ~bounded traffic  with $\sigma(s) = 50$~kb and $\rho(s) = 30$ kbps   and  $W=20$~kHz.}
\label{fig:viol-vs-delay-multihop2}
%\vspace{-4mm}
\end{figure}

\begin{figure}
\centering
\includegraphics [width=5in]{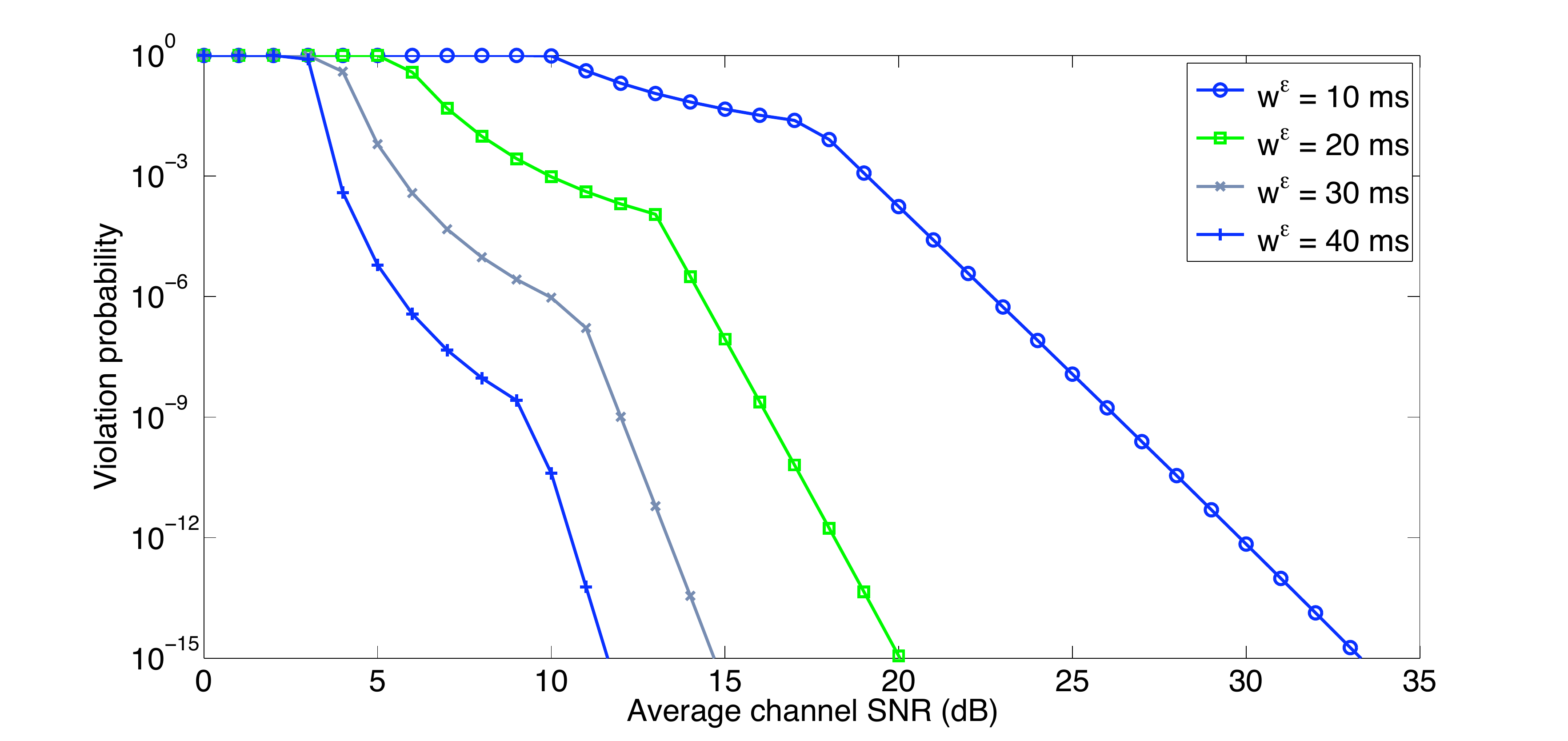}
%\vspace{-2mm}
\caption{Delay bound violation probability ($\eps(w)$) vs. average channel SNR ($\bar{\gamma} $)  for  multi-hop   Rayleigh fading channels   for   $N=10$,     $w^{\eps}=10,20,30,40$~ms,    \sr ~bounded traffic  with $\sigma(s) =~50$~kb and $\rho(s) = 20$~kbps   and  $W=20$~kHz.}
\label{fig:viol-vs-SNR-multihop}
%\vspace{-4mm}
\end{figure}

Suppose that buffer sizes are set to satisfy the end-to-end backlog. 
For a fixed buffer size $b_{\rm max}$, we can then 
use the probability $\P(B_{\rm net} (t) > b_{\rm max})$ as an estimate of the 
probability of dropped traffic, and refer to it as the {\it loss probability}. 
In Figs.~\ref{fig:loss-vs-snr-multihop} and~\ref{fig:loss-vs-snr2-multihop}, we depict the loss probability as 
a function of the average channel SNR, for values of $b_{\rm max}=200$~kb (Fig.~\ref{fig:loss-vs-snr-multihop}) and $b_{\rm max}=400$~kb (Fig.~\ref{fig:loss-vs-snr2-multihop}), 
for traffic with a rate of $\rho(s)= 20$~and $30$~kbps, and for $N=1,10$, and~$20$ nodes. 
The figure shows the minimum SNR needed to support a given loss probability 
is very sensitive to the number of network nodes. 
%%---end----LP-vs-Gamma--------------------------------------------------------
%
%---begin----viol-vs-delay---------------------------------------------------

In Figs.~\ref{fig:viol-vs-delay-multihop} and~\ref{fig:viol-vs-delay-multihop2} we show the violation probability for 
given end-to-end delay bounds, where we compare the delays 
at a single node ($N=1$) with a multi-hop network ($N=10$) for 
different average channel SNR values,  
where we use Eqs.~\eqref{eq:Wbound} and~\eqref{eq:delay-M}. The traffic parameters are $\sigma(s) = 50$~kb for the burst, and $\rho(s) = 20$~kbps (in Fig.~\ref{fig:viol-vs-delay-multihop})  $\rho(s) = 30$~kbps (in Fig.~\ref{fig:viol-vs-delay-multihop2}). 
The figures illustrates that at sufficiently high SNR values, low delays 
are achieved even when traffic traverses 10~links. When the SNR is decreased, 
we can observe how the delay performance deteriorates in the multi-hop scenario.  
Note that the curves for the violation probability are essentially piecewise linear with two segments. This is caused by the different 
exponential decay rates, which follows from Eq.~\eqref{eq:delay-M}. 
Depending on where in the equation the minimum occurs, 
we obtain a faster or slower decay.

Next we investigate how the delay violation probability for given end-to-end delay bounds is impacted by the SNR of the channel. We consider a 
network with $N=10$~nodes, with the same parameters as before. 
The traffic parameters are $\sigma(s) = 50$~kb and $\rho(s) = 20$~kbps. 
We consider end-to-end delay bounds of $w^{\eps}=10,20,30,40$~ms. 
Figs.~\ref{fig:viol-vs-SNR-multihop} presents the results. An interesting 
observation is that the SNR required to support a given violation probability for a short delay bound  $w^{\eps}=10$~ms, the SNR 
requirement increase fast for low violation probabilities. Delay 
bounds  $w^{\eps}=20$~ms and higher can be supported with 
low violation probabilities even when the average SNR is small. 
 
The graphs presented here can be used in the planning of a multi-hop wireless network where predefined QoS bounds are desired for a given transmission rate.  Since the average SNR depends largely on the path loss, which, in turn, is a function of the transmission radius, 
the graphs could help with determining the maximum distance 
between nodes to support a desired transmission rate and QoS.

\clearpage
\begin{figure}
\centering
\includegraphics [width=3in]{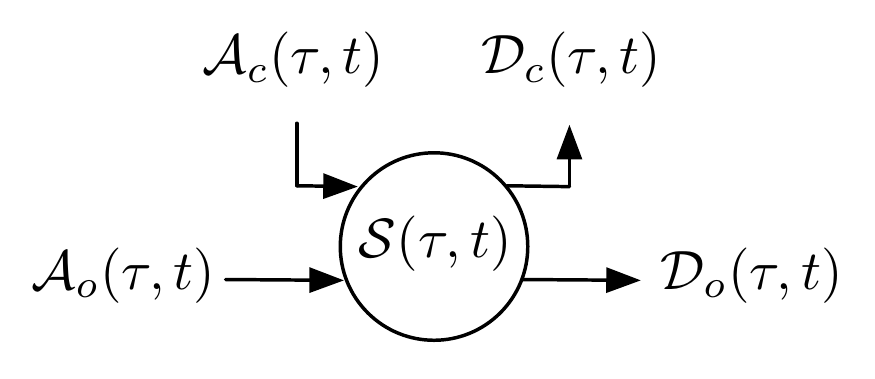}
%\vspace{-2mm}
\caption{Single Fading Channel with Cross Traffic.}
\label{fig:leftover-snr}
%\vspace{-4mm}
\end{figure}

\section{Fading Channels With Cross Traffic}

Consider a scenario in Fig.~\ref{fig:leftover-snr} where a through 
flow arriving to a fading channel shares the available bandwidth 
with  other flows.  We will refer to the traffic 
from these other flows as  {\it  cross traffic}. 
We use $\A_o(\tau,t)$ and $\A_c(\tau,t)$ to denote the 
SNR arrival processes of the through flow and the cross traffic, respectively, and
let $\D_o(\tau,t)$ and $\D_c(\tau,t)$ denote the corresponding
departure processes.  
In the SNR domain, cross traffic can  be viewed as 
reducing the channel 
capacity of the through flow  by generating interference. 

The following lemma states that, in the SNR domain, 
the service available to a through flow that experiences cross traffic 
at a channel can be expressed by a dynamic SNR server.   
\begin{lemma}  \label{lem:leftover}
Consider a channel with a through flow and cross traffic as shown in 
Fig.~\ref{fig:leftover-snr}. Assume that the channel provides a dynamic SNR server
to the aggregate of through flow and cross traffic, 
with service process $\S(\tau,t)$, i.e,
$$
\D_o(0,t) \cdot\D_c(0,t)\ge (\A_o\cdot \A_c)\conv \S (0,t)\,.
$$
Then
\begin{align*}
\S_o(\tau,t)
= \frac{\S(\tau, t)}{\A_c(\tau,t)}
\end{align*}
is a dynamic SNR server satisfying for all $t \geq 0$ that 
\begin{align}\label{eq:leftover-simple}
\D_o(0,t)\ge \A_o\conv \S_o(0,t)\,.
\end{align}
If, moreover, the service to the cross
flows satisfies the upper bound
$\D_c(0,t)\le \A_c\conv\S(0,t)$, then
\begin{align}\label{eq:leftover-plus}
\D_o(0,t)\ge \A_o\conv \max\{1,\S_o\}(0,t)\,.
\end{align}
\end{lemma}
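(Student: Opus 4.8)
The plan is to carry out the entire argument in the SNR domain, using only the given aggregate server inequality, the causality relations $\D_c(0,t)\le\A_c(0,t)$ and $\D_o(0,t)\le\A_o(0,t)$, and the fact that SNR arrival processes factor multiplicatively, $\A_c(0,t)=\A_c(0,u)\cdot\A_c(u,t)$ for $0\le u\le t$ (this holds because arrivals are exponentials of cumulative sums, even though the server $\S$ need not factor this way). First I would substitute the traffic-aggregation identity $\A_{\rm agg}=\A_o\cdot\A_c$ into the hypothesis to write $\D_o(0,t)\cdot\D_c(0,t)\ge\inf_{0\le u\le t}\{\A_o(0,u)\A_c(0,u)\S(u,t)\}$. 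Since time is discrete and the index ranges over the finite set $\{0,\dots,t\}$, the infimum is attained; I would fix a minimizer $u_0$ and divide by the strictly positive quantity $\D_c(0,t)$ to obtain the master inequality \[\D_o(0,t)\ge\frac{\A_o(0,u_0)\,\A_c(0,u_0)\,\S(u_0,t)}{\D_c(0,t)}\,.\]

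For the first claim (Eq.~\eqref{eq:leftover-simple}) I would bound $\D_c(0,t)$ from above by $\A_c(0,t)$ using causality, then apply the factorization $\A_c(0,t)=\A_c(0,u_0)\A_c(u_0,t)$ so that the cross-traffic terms cancel: the master inequality collapses to $\D_o(0,t)\ge\A_o(0,u_0)\,\S(u_0,t)/\A_c(u_0,t)=\A_o(0,u_0)\,\S_o(u_0,t)$. Because $u_0$ is one of the admissible split points, this per-term bound dominates the infimum $\A_o\conv\S_o(0,t)$, which is exactly the statement that $\S_o=\S/\A_c$ is a dynamic SNR server in the sense of Eq.~\eqref{eq:dynamicserver}.

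For the refined claim (Eq.~\eqref{eq:leftover-plus}), I would keep the \emph{same} minimizer $u_0$ but now invoke the extra hypothesis $\D_c(0,t)\le\A_c\conv\S(0,t)$ evaluated at the split point $v=u_0$, giving $\D_c(0,t)\le\A_c(0,u_0)\S(u_0,t)$. Substituting this into the master inequality cancels the numerator entirely and yields the second per-term bound $\D_o(0,t)\ge\A_o(0,u_0)$. Combining the two bounds obtained at the common index $u_0$ gives $\D_o(0,t)\ge\A_o(0,u_0)\max\{1,\S_o(u_0,t)\}$, and since this term again dominates the corresponding infimum, $\D_o(0,t)\ge\A_o\conv\max\{1,\S_o\}(0,t)$.

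The step I expect to be the main obstacle is the coordination of the two lower bounds at a single index. The simple bound uses causality while the refined bound uses the cross-flow service guarantee, and these are genuinely different estimates of $\D_c(0,t)$; the $\max\{1,\S_o\}$ form only emerges if both are evaluated at the same $u_0$, which is why I would fix the minimizer of the aggregate convolution at the very start rather than manipulate the two infima separately (dividing one infimum by another does not simplify). I would also take care to flag that the cancellation rests on the multiplicative concatenation of the arrival process $\A_c$, a property that does \emph{not} extend to the dynamic server $\S$, so the argument cannot be symmetrized by treating $\S$ as if it factored.
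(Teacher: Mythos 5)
Your proof is correct and follows essentially the same route as the paper's own argument: fix the minimizer $\tau^*$ of the aggregate convolution, divide by $\D_c(0,t)$, then bound $\D_c(0,t)$ by $\A_c(0,t)$ (causality plus the multiplicative splitting of $\A_c$) for Eq.~\eqref{eq:leftover-simple}, and by $\A_c(0,\tau^*)\cdot\S(\tau^*,t)$ (the cross-flow upper bound evaluated at that same split point) for Eq.~\eqref{eq:leftover-plus}, combining the two per-term bounds at the common index. Your closing observations --- that the two estimates must be coordinated at a single index, and that the cancellation uses the multiplicative concatenation of $\A_c$ but cannot be applied to $\S$ --- are exactly the points the paper's proof relies on.
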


\begin{proof} 
For any a  sample path, and any $t\ge 0$, we have 
\[
\D_o (0,t) \cdot \D_c(0,t) \ge
 \inf_{0\le \tau \le t} \{ (\A_o(0,\tau) \cdot \A_c(0,\tau))
\cdot \S(\tau, t)\} \ .
\]
Let $\tau^*$ be the point where the infimum is assumed.
Dividing by $\D_c(\tau, t)$, we obtain 
\begin{align*}
 \D_o(0,t) & \ge \frac{(\A_0 \cdot \A_c)\conv \S(0,t)}{
\D_c(0,t)}\\
& \ge \frac{\A_o(0,\tau^*) \cdot \A_c(0,\tau^*) 
\cdot \S(\tau^*, t)}{\D_c(0,t)}  \Big \} \\
 & \ge \Big \{ \A_o(0,\tau^*) \cdot\frac{  
\S(\tau^*, t)}{\A_c(\tau^*,t)}  \Big \}\,,
\end{align*}
where we used that $\D_c(0,t)\le \A_c(0,t)$ by causality.
This the first claim in Eq.~\eqref{eq:leftover-simple}.
To see the second claim, assume 
that $\D_c(0,t)\le \A_c\conv\S(0,t)$.  Then
$\D_c(0,t)\le \A_c(0,\tau^*)\cdot\S(\tau^*,t)$,
and therefore
\begin{align*}
 \D_o(0,t) 
%& \ge \frac{(\A_0 \cdot \A_c)\conv \S(0,t)}{ \A_c\conv \S(0,t)}\\
& \ge \frac{\A_o(0,\tau^*) 
\cdot \A_c(0,\tau^*) \cdot \S(\tau^*, t)}{
A_c(0,\tau^*)\cdot\S(\tau^*,t)}  \\
&=\A_0(\tau^*,t)\,.
\end{align*}
Combining this with the first part of the proof, we obtain
\begin{align*}
\D_0(0,t)\ge \A_0(0,\tau^*)\cdot \max\{1,\S_o(\tau^*,t)\}\,,
\end{align*}
proving the second claim.
\end{proof}
Note that $\S_o(\tau,t)$ need not be monotone
in $t$, i.e., it may not lie in  $\F^+$.

\subsection{Performance Bounds With Cross Traffic}

We next estimate the service process available to the through flow 
across a cascade of channels.
Assuming that $\A_c$ and $\S$ are independent,
we obtain for the Mellin transform of
the  service process at a single node
\begin{align}\label{eq:leftover4} 
\M_{\S_o}(s,\tau,t) &=    \M_{ {\S }/{\A_c } } (s,\tau,t) \nonumber \\
&=    \M_{\S }  (s,\tau,t) \cdot  \M_{\A_c }  (2-s,\tau,t) \,.
\end{align}
These  service descriptions may be convolved,
using Lemmas~\ref{thm:conc} and~\ref{lem:MTConvDecon}
to obtain a bound for the Mellin transform of the service
provided by a cascade of fading channels to a flow
that experiences cross traffic at each node.

\begin{corollary}\label{cor:cascade-leftover}
Consider a network as in Fig.~\ref{fig:tandem}, where 
a through flow experiences cross traffic at each fading channel. 
Let the SNR service process at each channel 
be given by
Eq.~\eqref{eq:cumulative_ServiceCapacity}. Assume that
the arrival process of the cross traffic 
satisfies Eq.~\eqref{eq:srMA} 
with parameters $(\sigma_c(s),\rho_c(s))$.
Assume that arrivals from through flow and cross traffic, 
as well as the service processes at each channel 
are independent. Then, the service provided by the network
to the through flow satisfies $\D_o(0,t)\ge \A_o\conv\S_{o,\rm net}(0,t)$,
where the Mellin transform of $\S_{o,\rm net}(\tau,t)$
satisfies, for $s<1$,
\begin{align*}
\M_{\S_{o, \rm net}}(s,\tau,t)
\le e^{(1-s)\cdot N\sigma_c(1-s)}
\binom{N-1+t-\tau}{t-\tau} 
\cdot \bigl(\M_{g(\gamma)}(s)e^{(1-s)\cdot\rho_c(1-s)}\bigr)^{t-\tau}\,.
\end{align*}
\end{corollary}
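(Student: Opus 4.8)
The plan is to mirror the argument for Corollary~\ref{cor:cascade}, replacing the bare channel service $\S_n$ at each node by the leftover service $\S_{o,n}$ that Lemma~\ref{lem:leftover} guarantees to the through flow. First I would invoke Lemma~\ref{lem:leftover} at each of the $N$ nodes to write $\S_{o,n}(\tau,t)=\S_n(\tau,t)/\A_{c,n}(\tau,t)$ as a dynamic SNR server for the through flow, so that by the iterated concatenation formula in Eq.~\eqref{eq:concatenationN} the end-to-end leftover server is $\S_{o,\rm net}=\S_{o,1}\conv\dots\conv\S_{o,N}$ and satisfies $\D_o(0,t)\ge\A_o\conv\S_{o,\rm net}(0,t)$. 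Since the stated independence hypotheses make the $N$ pairs $(\S_n,\A_{c,n})$ mutually independent, the individual leftover servers $\S_{o,n}$ are independent, which is exactly what is needed to apply the \mx-convolution bound Eq.~\eqref{eq:convMT} of Lemma~\ref{lem:MTConvDecon}.

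Next I would bound the Mellin transform of a single leftover server. By Eq.~\eqref{eq:leftover4} (which is Eq.~\eqref{eq:Property4MT} applied to the quotient $\S_n/\A_{c,n}$), for $s<1$,
\begin{align*}
\M_{\S_{o,n}}(s,\tau,t)=\M_{\S_n}(s,\tau,t)\cdot\M_{\A_{c,n}}(2-s,\tau,t)\,.
\end{align*}
The i.i.d.\ service model of Eq.~\eqref{eq:cumulative_ServiceCapacity} gives $\M_{\S_n}(s,\tau,t)=(\M_{g(\gamma)}(s))^{t-\tau}$, while the \sr~bound Eq.~\eqref{eq:srMA} on the cross traffic, evaluated at the shifted argument $2-s$ (so that $(2-s)-1=1-s$), yields $\M_{\A_{c,n}}(2-s,\tau,t)\le e^{(1-s)(\rho_c(1-s)(t-\tau)+\sigma_c(1-s))}$. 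Collecting terms, the per-node bound factorizes as
\begin{align*}
\M_{\S_{o,n}}(s,\tau,t)\le e^{(1-s)\sigma_c(1-s)}\cdot\bigl(\M_{g(\gamma)}(s)\,e^{(1-s)\rho_c(1-s)}\bigr)^{t-\tau}\,,
\end{align*}
that is, into a per-node constant carrying $\sigma_c$ and a per-time-unit base carrying $\rho_c$.

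With this factorization in hand I would finish exactly as in Corollary~\ref{cor:cascade}. Substituting into Lemma~\ref{lem:MTConvDecon} expresses $\M_{\S_{o,\rm net}}(s,\tau,t)$ as a sum over all sequences $\tau=u_0\le u_1\le\dots\le u_N=t$ of the products $\prod_{n=1}^N\M_{\S_{o,n}}(s,u_{n-1},u_n)$. Each such product is independent of the intermediate times: the $\sigma_c$-constants accumulate to $e^{(1-s)N\sigma_c(1-s)}$, and the exponents of the base telescope via $\sum_{n}(u_n-u_{n-1})=t-\tau$, so every product equals $e^{(1-s)N\sigma_c(1-s)}\bigl(\M_{g(\gamma)}(s)\,e^{(1-s)\rho_c(1-s)}\bigr)^{t-\tau}$. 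Since the number of admissible sequences is $\binom{N-1+t-\tau}{t-\tau}$ (as in the counting step of Corollary~\ref{cor:cascade}), multiplying by this count gives precisely the claimed bound.

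The computation is routine once the lemmas are assembled, so the only real points to get right are bookkeeping rather than a deep obstacle. I must check that the independence assumption is strong enough to license both Eq.~\eqref{eq:Property4MT} at each node and the product estimate across nodes, and that the regime $s<1$ used in Lemma~\ref{lem:MTConvDecon} is consistent with the requirement $2-s>1$ under which Eq.~\eqref{eq:srMA} is valid. A secondary subtlety is that $\S_{o,n}$ need not be monotone in $t$ (as noted after Lemma~\ref{lem:leftover}), so I would rely on the fact that Lemma~\ref{lem:MTConvDecon} and the concatenation argument use only non-negativity and independence, not membership in $\F^+$, and that using $\S_{o,n}$ rather than $\max\{1,\S_{o,n}\}$ yields a valid (larger) upper bound on the Mellin transform for $s<1$.
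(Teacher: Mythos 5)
Your proposal is correct and follows essentially the same route as the paper's own proof: per-node factorization $\M_{\S_{o,n}}(s,\tau,t)=\M_{\S_n}(s,\tau,t)\cdot\M_{\A_{c,n}}(2-s,\tau,t)$ via Eq.~\eqref{eq:leftover4} with the \sr~bound at argument $2-s>1$, concatenation by Lemma~\ref{thm:conc}, the convolution bound of Lemma~\ref{lem:MTConvDecon}, telescoping of the exponents over $\tau=u_0\le\dots\le u_N=t$, and the count $\binom{N-1+t-\tau}{t-\tau}$. Your added remarks on the non-monotonicity of $\S_{o,n}$ and on the consistency of the regimes $s<1$ and $2-s>1$ are accurate and match observations the paper itself makes.
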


\begin{proof}
For a single node ($N=1$), we estimate for $0\le \tau\le t$
and $s<1$
\begin{align*}
\M_{\S_o}(s,\tau,t) &= \M_\S(s,\tau,t)\cdot \M_{\A_c}(2-s,\tau,t)\\
%&\le \bigl(\M_{g(\gamma)}(s)\bigr)^{t-\tau}\cdot e^{(1-s)\cdot(\rho_c(1-s)(\tau-t) +\sigma_c(1-s)}\\
&\le e^{(1-s)\cdot\sigma_c(1-s)}
\cdot\bigl(\M_{g(\gamma)}(s)\cdot e^{(1-s)\cdot\rho_c(1-s)}\bigr)^{t-\tau}\,,
\end{align*}
where we have used that $2-s>1$ for $s<1$.
By Lemma~\ref{thm:conc}, the service of the network is 
given by the \mx~convolution,
$\S_{o,\rm net}(\tau,t)=\S_{o,1}\conv \dots \conv \S_{o,N}(\tau,t)$.
We use Lemma~\ref{lem:MTConvDecon} to bound 
its Mellin transform by
\begin{align*}
\M_{\S_{\rm net}}(s,\tau,t) 
&\le \sum_{u_1\dots u_{N-1}} \prod_{n=1}^N M_{\S_{o,n}}(s,u_{n-1},u_n)\,,
\end{align*}
where the  sum runs over all sequences $u_0\le u_1\le\dots \le u_N$
with  $u_0=\tau$ and $u_N=t$. Collecting terms, we compute,
as in the proof of Corollary~\ref{cor:cascade},
that each product evaluates to the same term,
\begin{align*}
\prod_{n=1}^N M_{\S_{o,n}}(s,u_{n-1},u_n) 
&= e^{(1-s)\cdot N\sigma_c(1-s)}
\bigl(\M_{g(\gamma)}(s)\cdot e^{(1-s)\cdot(\rho_c(1-s)}\bigr)^{t-\tau}\,.
\end{align*}
Since the number of summands is $\binom{N-1+t-\tau}{t-\tau}$,
this proves the claim.
\end{proof}

We now consider Rayleigh fading channels, and assume that 
both through flow and cross traffic are \sr~bounded traffic 
with parameters $\sigma_o(s)$ and $\rho_o(s)$ for the through 
flow, and $\sigma_c(s)$ and $\rho_c(s)$ for the cross traffic.  
We next give end-to-end performance bounds,
using Theorem~\ref{thm:BLbound}. For the function
$\Msum(s,\tau, t)$, we compute as in Sec.~\ref{sec:rayleigh}.B
for $0\le \tau \le t$,
\begin{align*}
\Msum_{o,\rm net} (s, \tau, t) 
&\leq \ \frac{e^{s\cdot(\rho_o(s)(t-\tau)+\sigma_o(s)+N\sigma_c(s))}}{(1-V_o(s) )^N}
\,,
\end{align*}
where 
\begin{align*}
V_o(s)= e^{s\cdot(\rho_o(s)+\rho_c(s)} e^{1/\bar\gamma} 
\bar{\gamma}^{-s} \Gamma(1-s,\bar\gamma^{-1})\,.
\end{align*}
This computation is valid under the stability condition that
$V_o(s)<1$. Thus, we obtain that the output burstiness at 
the network egress gives 
\begin{eqnarray}\label{eq:LOMultiNodeOutput}
d^\eps_{o,\rm net}(\tau,t)  %  \hspace{-.3cm} 
%&\le& \hspace{-.3cm} 
% 		 \inf_{s>0} \Bigg \{ \frac{1}{s} \Bigg [   \log \Big ( e^{s\cdot
%\sigma_o(s) + N\sigma_c(s)+\rho_o(s)(t-\tau)} \nonumber \\
%    && 		 \cdot \sum_{u = 0}^{ t}  \binom{N-1+u}{u}
%\big (V_o(s) \big )^ u   \Big ) \! - \! \log \eps   \Bigg ]    \! \Bigg \}  \nonumber \\
 \le %& \hspace{-1.7cm}   
 \inf_{s>0} \Bigg \{ \sigma_o(s)+N\sigma_c(s) 
+\rho_o(s)(t-\tau)
 -\frac{1}{s} \Bigg [    N \log  \big ( 1 -      V_o(s) \big ) +      \log \eps     \Bigg ]   \Bigg \} \,.  \nonumber 
 \end{eqnarray}  
For the end-to-end backlog of the through flow, we obtain 
\begin{eqnarray}\label{eq:LOMultiNodeBacklog}
b^\eps_{o,\rm net}(t)
\le \inf_{s>0} \Bigg \{ \sigma_o(s)+N\sigma_c(s) 
-\frac{1}{s} \Bigg [    N \log  \big ( 1 - 
    V_o(s) \big ) +     \log \eps     \Bigg ]    \Bigg \} \,.  \nonumber 
 \end{eqnarray}  
Similarly, for the delay bound, we estimate for $w\ge 0$
\begin{align*}
\notag
\Msum_{\rm net} (s, t+w, t) 
&\leq \ e^{s(-\rho_o(s) w +\sigma_o(s)+N\sigma_c(s))}
\sum_{u=w}^\infty \binom{N-1+u}{u}(V_o(s))^u\\
&\leq \ 
\inf_{s>0}\left\{
\frac{e^{s(-\rho_o(s) w +\sigma_o(s)+N\sigma_c(s))}
}{(1-V_o(s) )^N} 
\cdot\min \left\{1,(V_o(s))^w w^{N-1} \right\}\right\} \,.
\end{align*}

%The loss probability is given by
%\begin{eqnarray}\label{eq:lossPr4}
%\eps(b)  &\le& \max \Big \{ 1, \inf_{s \ge 1}  
%\Big \{ \frac {e^{-(s-1)[b- \sigma(s-1) - \sigma_c(s-1)]} }{(1- V_o(s))^N }\Big \} \Big \} ,
% \end{eqnarray}

\begin{figure}[h]
\centering
\includegraphics [width=5in]{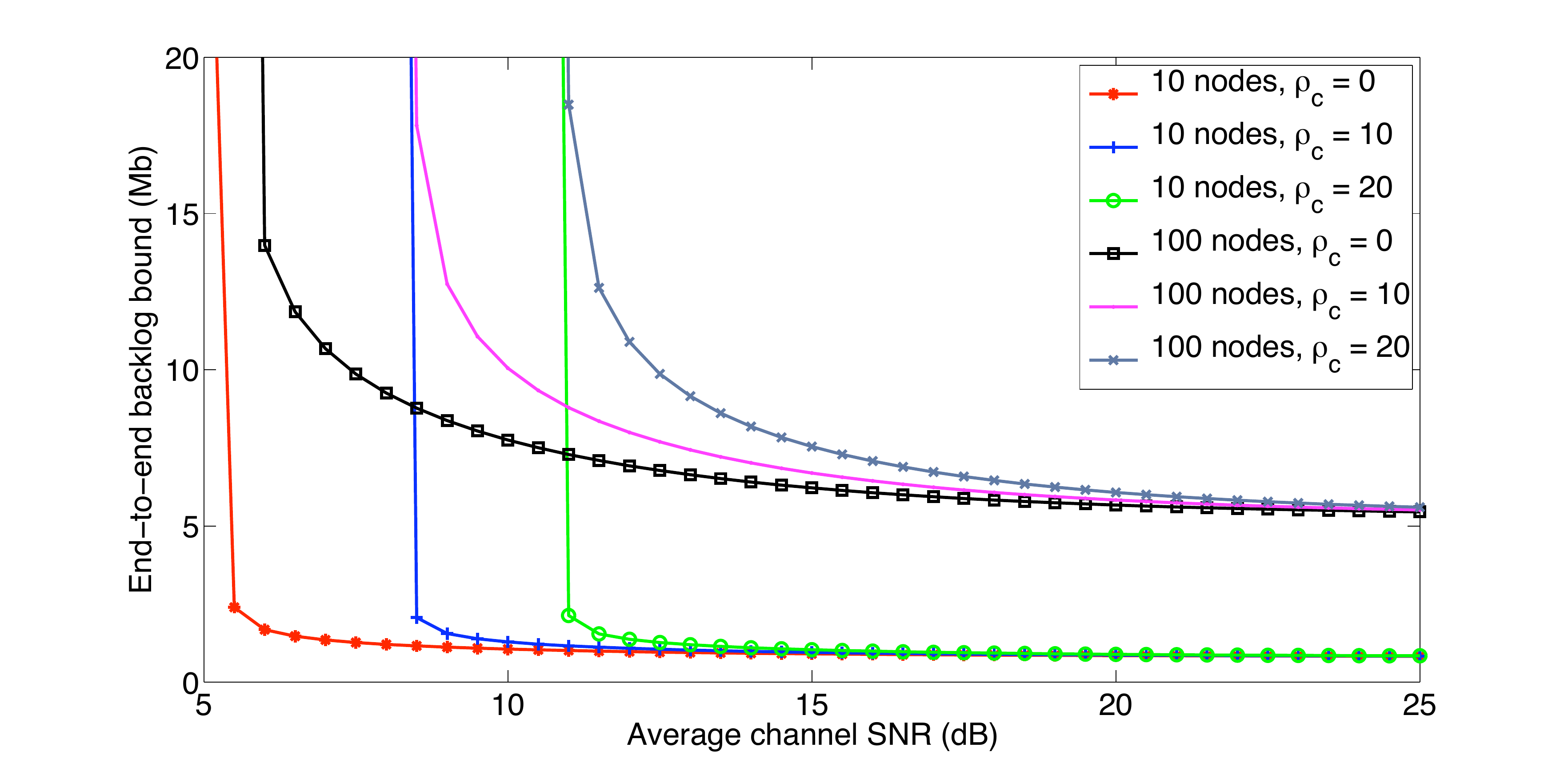}
%\vspace{-2mm}
\caption{End-to-end backlog bound ($b_{o,{\rm net}}^{\eps}$) vs. average channel SNR ($\bar{\gamma} $) for   multi-hop    Rayleigh fading channels  with   $\eps =10^{-4}$, $N=10$~and~$100$, \sr~bounded traffic with $\sigma_o(s) = 50$~kb,  $\rho_o(s) = 30$~kbps, \sr ~bounded cross traffic with $\sigma_c(s) = 50$~kb and $\rho_c(s)=0, 10, 20$, and $W=20$~kHz.}
\label{fig:BacklogVsSNR_wCT}

\vspace{-4mm}
%\end{figure}
%
%\begin{figure}
\centering
\includegraphics [width=5in]{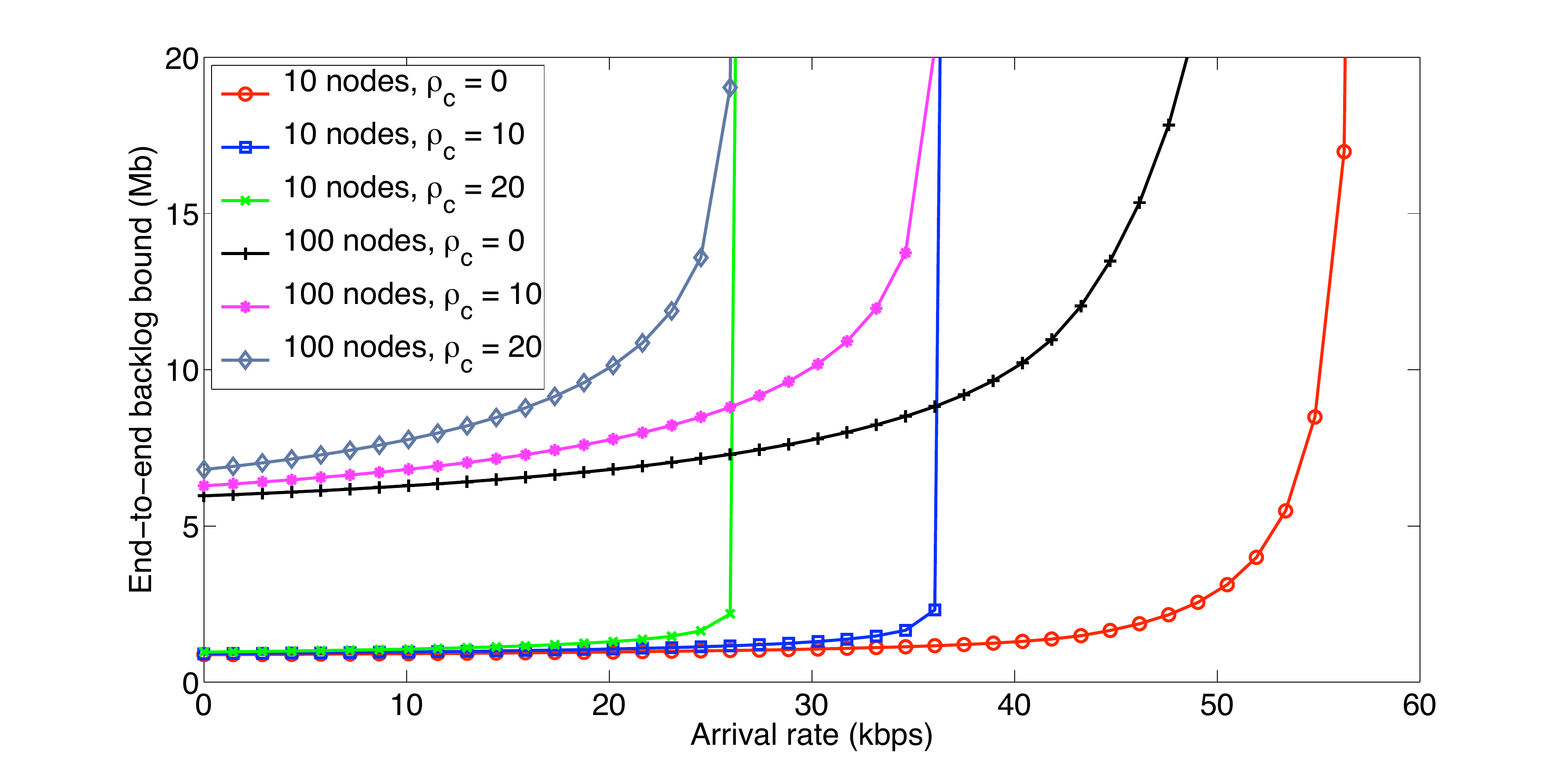}
%\vspace{-2mm}
\caption{End-to-end backlog bound  ($b_{o,{\rm net}}^{\eps}$) vs. arrival rate ($\rho_o(s)$) for   multi-hop   Rayleigh fading channels with   $\eps =10^{-4}$, \sr ~bounded traffic with $\sigma_o(s) = 50$~kb, \sr ~bounded cross traffic with $\sigma_c(s) = 50$~kb and $\rho_c(s)=0, 10, 20$, $\bar{\gamma} = 10$~dB, and $W=20$~kHz.}
\label{fig:BacklogVsRho_wCT1}

\vspace{-4mm}
%\end{figure}
%
%\begin{figure}[h]
\centering
\includegraphics [width=5in]{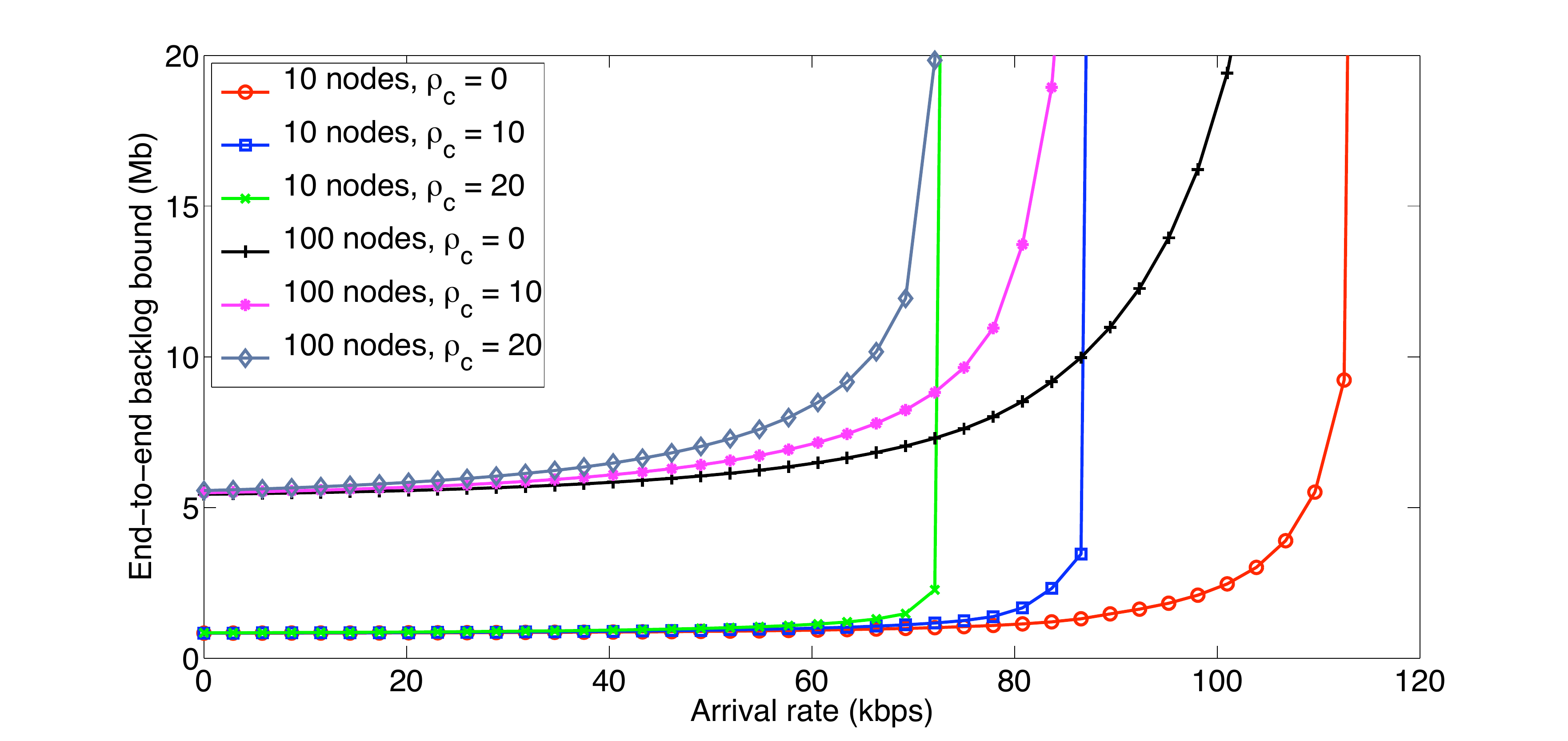}
%\vspace{-2mm}
\caption{End-to-end backlog bound  ($b_{o,{\rm net}}^{\eps}$) vs. arrival rate ($\rho_o(s)$) for   multi-hop   Rayleigh fading channels with   $\eps =10^{-4}$, \sr ~bounded traffic with $\sigma_o(s) = 50$~kb, \sr ~bounded cross traffic with $\sigma_c(s) = 50$~kb and $\rho_c(s)=0, 10, 20$, $\bar{\gamma} = 20$~dB, and $W=20$~kHz.}
\label{fig:BacklogVsRho_wCT2}
%\vspace{-4mm}
\end{figure}

\subsection{Numerical Examples}
\label{sec:numerical-cross}

We now present numerical examples for Rayleigh fading channels. 
We use the same traffic and channel model as in the numerical examples 
in Sec.~\ref{sec:numerical}. The cross traffic is  \sr~bounded traffic 
with parameters $\sigma_c(s)$ and  $\rho_c(s)$. As before, we assume
that cross traffic is deterministically bounded by using 
fixed values for $\sigma_c(s)$ and $\rho_c(s)$. Consequently, 
there is no statistical multiplexing between through and cross traffic. 
Throughout, the  violation 
probability is  set to $\eps =10^{-4}$.

In Fig.~\ref{fig:BacklogVsSNR_wCT} we show the end-to-end backlog bound $b_{o,{\rm net}}^{\eps}$ as a function of the average channel SNR 
$\bar{\gamma}$. The through flow has fixed parameters 
$\sigma_o(s) = 50$~kb and  $\rho_o(s) = 30$~kbps, and the 
cross traffic has parameters $\sigma_c(s) = 50$~kb and $\rho_c(s)=0, 10, 20$. We consider networks with $N=10$ and $N=100$ nodes. 
The graph illustrates how the bursts of the cross traffic contribute 
to the backlog bound. There is an additive component for each 
traversed channel, which explains the difference for the backlog 
with $10$ and $100$~channels.  The minimum required SNR needed for 
stability appears  less 
sensitive to the number of channels.

In Figs.~\ref{fig:BacklogVsRho_wCT1} and~\ref{fig:BacklogVsRho_wCT2} we again evaluate the end-to-end backlog bound $b_{o, {\rm net}}^{\eps}$. Here, we keep the channel SNR values 
constant at $\bar{\gamma}=10$ (Fig.~\ref{fig:BacklogVsRho_wCT1}) and 
$\bar{\gamma}=20$ (Fig.~\ref{fig:BacklogVsRho_wCT2}). We vary the rate of the through flow 
$\rho_o(s)$ on the x axis, and plot graphs for different values of 
$\rho_c(s)=0, 10, 20$. We again consider $N=10$ and $N=100$.
The outcomes are as seen in the previous figure. We observe the 
effects of the additive component contributed by each node. 
Also, we observe, at least for lower transmission rates, 
that the stability  of end-to-end backlogs is not sensitive to the number of traversed channels.

\clearpage

%\vspace{-2mm}
\section{Conclusion}
\label{sec:conclusions}
We have developed a novel network calculus 
that can incorporate  
fading channel distributions, without the need for 
secondary models, such as FSMC. We use the calculus to compute 
statistical bounds on 
delay and backlog of multi-hop wireless networks with 
fading channels.  
We took a fresh point of view, where the descriptions of 
the arrivals  and the 
fading channels reside in different domains, referred to as 
bit domain and SNR domain. 
We found that by mapping arrival processes to the SNR domain, 
an end-to-end analysis with fading channels becomes tractable. 
An important discovery was that arrivals and 
service in the SNR domain obey the laws of a \mx~dioid algebra.  

The analytical framework developed in this paper appears 
suitable to study a broad class of fading channels  
and their impact on the network-layer performance in wireless networks. 
Even though we computed numerical examples for very simplified 
networks, in particular, we made strong independence assumptions 
for the fading channels, 
our \mx~network calculus is applicable to networks where these 
assumptions are relaxed. Generalizing our framework 
and obtaining a more profound understanding of the dioid algebra 
and computational methods in the SNR domain is the subject of future research.

%\bibliographystyle{unsrt}
%\bibliographystyle{abbrv}
%\bibliography{IEEEabrv,../Infocom13Bib,../jorg-min-x}{}

\end{document}